\theoremstyle{plain}
\newtheorem{theorem}{Theorem}
\newtheorem{proposition}[theorem]{Proposition}  
\theoremstyle{definition}
\newtheorem{definition}[theorem]{Definition}
\newtheorem{assumption}[theorem]{Assumption}
\newtheorem{remark}[theorem]{Remark}
\newcommand{\<}{\langle} 
\renewcommand{\>}{\rangle}
\renewcommand{\(}{\left(}                               
\renewcommand{\)}{\right)}
\renewcommand{\[}{\left[}
\renewcommand{\]}{\right]}                      
\newcommand{\norm}[1]{\left\|{#1}\right\|}              
\newcommand{\dom}[1]{{\rm dom}(#1)}     
\newcommand\Cb{\mathbb{C}}      
\newcommand\Eb{\mathbb{E}}                      
\newcommand\Pb{\mathbb{P}}      
\newcommand\Rb{\mathbb{R}}                                                                              
\newcommand\Ib{\mathbb{I}}
\newcommand\BS{\mathrm{BS}}
\newcommand\Ac{\mathscr{A}}
\newcommand\Bc{\mathscr{B}}
\newcommand\Ec{\mathscr{E}}
\newcommand\Fc{\mathscr{F}}
\newcommand\Hc{\mathscr{H}}
\newcommand\Oc{\mathscr{O}}
\newcommand\Pc{\mathscr{P}}
\newcommand\Sc{\mathscr{S}}
\newcommand\Ad{\Ac^*}
\newcommand\eps{\varepsilon}
\newcommand\om{\omega}
\newcommand\Om{\Omega}
\newcommand\sig{\sigma}
\newcommand\lam{\lambda}
\newcommand\del{\delta}
\newcommand\Nt{\widetilde{N}}
\renewcommand\d{\partial}
\begin{document}

\title{
The Smile of certain L\'evy-type Models
}
\author{ 
Antoine Jacquier
\thanks{Department of Mathematics, Imperial College London, United Kingdom.}
\and
Matthew Lorig
\thanks{ORFE Department, Princeton University, Princeton, USA.  Work partially supported by NSF grant DMS-0739195.}
}

\date{This version: \today}

\maketitle

\begin{abstract}
We consider a class of assets whose risk-neutral pricing dynamics are described by an exponential L\'evy-type process subject to default.  The class of processes we consider features locally-dependent drift, diffusion and default-intensity as well as a locally-dependent L\'evy measure.  Using techniques from regular perturbation theory and Fourier analysis, we derive a series expansion for the price of a European-style option.  We also provide precise conditions under which this series expansion converges to the exact price.  Additionally, for a certain subclass of assets in our modeling framework, we derive an expansion for the implied volatility induced by our option pricing formula.  The implied volatility expansion is exact within its radius of convergence.  
As an example of our framework, we propose a class of CEV-like L\'evy-type models.  Within this class, approximate option prices can be computed by a single Fourier integral and approximate implied volatilities are explicit (i.e., no integration is required).  Furthermore, the class of CEV-like L\'evy-type models is shown to provide a tight fit to the implied volatility surface of S{\&}P500 index options.
\end{abstract}

\textbf{Keywords} Regular Perturbation, L\'evy-type, Local Volatility, Implied Volatility, Default, CEV

%
%

\section{Introduction}
\label{sec:intro}
A \emph{local volatility} model is a model in which the volatility $\sig_t$ of an asset $X$ is a function of the current time $t$ and the present level of $X$.  That is, $\sig_t = \sig(t,X_t)$.  One advantage of local volatility models is that, like most scalar diffusions, transition densities (and therefore option prices) are often available in closed-form as eigenfunction expansions (see \citet*{linetskybook,lipton2002forex} and references therein).  However, local volatility models suffer from the fact that they do not permit the underlying asset to experience jumps, the need for which is well-documented in literature \citet*{eraker}.
Furthermore, local volatility models do not account properly for the forward volatility, and notoriously misprice options such as cliquet or forward-start.
\par
One class of models that does allow the underlying to jump is the exponential L\'evy class.  In this class, the underlying $X = e^Y$ is described as the exponential of a L\'evy process $Y$.  Aside from allowing the underlying to jump, exponential L\'evy models have the desirable feature that transition densities (and European option prices) can be computed quickly as generalized Fourier transforms (see \citet*{lewis2001simple,lipton2002,levendorskiibook,contbook}).  However, exponential L\'evy models are spatially homogeneous; neither the drift, volatility nor the jump-intensity have any local dependence.  Thus, exponential L\'evy models are not able to exhibit volatility clustering or capture the leverage effect, both of which are well-known features of equity markets.
\par
Recently, a number of authors have found methods of combining the desirable features of local volatility and exponential L\'evy models.  For example, \citet*{gobet-smart} derive an analytical formula for the approximate prices of European options, for models that include local volatility and compound Poisson jumps (i.e., models that include a finite activity L\'evy measure).   Their approach relies on asymptotic expansions around small diffusion and small jump frequency/size limits.  More recently, \citet*{pascucci} consider general local volatility models with independent L\'evy jumps (not necessarily finite activity).  Unlike, \citet{gobet-smart}, \citet{pascucci} make no small jump intensity/size assumption.  Rather the authors construct an asymptotic solution of the pricing equation by expanding the local volatility function as a Taylor series.  While both of the methods described above allow for local volatility and \emph{independent} jumps, neither of these methods allow for \emph{state-dependent} jumps.
\par
Stochastic jump-intensity is an important feature of equity markets (see \citet*{christoffersen2009}) and a locally dependent L\'evy measure is one way to incorporate stochastic jump-intensity into a modeling framework.  One analytically tractable way of obtaining a local L\'evy measure is to time-change a scalar Markov diffusion with a L\'evy subordinator, as described in \citet*{carr}.  Another analytically tractable method of working with local L\'evy measures is to write a local L\'evy measure as a power series in its local variable, as described in \citet*{lpp}.
\par
In this paper, we take a different approach.  We consider a L\'evy-type process whose infinitesimal generator separates into locally dependent and independent parts.  The locally independent part is the generator of a L\'evy process with killing.  We treat the locally dependent part of the generator as a regular perturbation about the locally independent part.  Thus, we are able to obtain a convergent series representation for the price of a European option.  A significant advantage of this method is that, when the locally independent part of the generator has no jump or killing component, we are able find a convergent series expansion for the implied volatility surface induced by our option pricing formula.
\par
The rest of this paper proceeds as follows: In Section \ref{sec:assumptions}, we present a class of exponential L\'evy-type models and state our assumptions about the market.  In Section \ref{sec:pricing}, using regular perturbation methods and Fourier analysis, we derive a series expansion for the price of a European option.  We also provide precise conditions under which this series converges to give the exact option price.  In Section \ref{sec:impvol}, we provide a series expansion for the implied volatility smile induced by a certain sub-class of models within our modeling framework.  This series is exact within its radius of convergence.  In Section \ref{sec:example}, we perform specific computations for a class of CEV-like L\'evy-type model.  In this class, approximate option prices can be computed by a single Fourier integral; approximate implied volatilities are explicit, requiring no integration.  Section \ref{sec:example} also includes extensive numerical examples, including a calibration to S{\&}P500 options.  Proofs and some sample Mathematica code can be found in an Appendix.  Lastly, some concluding remarks are given in Section \ref{sec:conclusion}.

%
%

\section{Model and assumptions}
\label{sec:assumptions}
We assume a frictionless market, no arbitrage and take an equivalent martingale measure $\Pb$ to be chosen by the market on a complete filtered probability space $(\Om,\Fc,\{\Fc_t,t \geq 0\},\Pb)$.  The filtration $\{\Fc_t, t \geq 0 \}$ represents the history of the market.  All processes defined below live on this space.  For simplicity, we assume zero interest rates and no dividends.  Thus, in the absence of arbitrage, all traded assets are martingales.  We consider a risky asset $X$, whose dynamics are given by
\begin{align}
X_t
        &=              \Ib_{\{t<\zeta\}} \exp(Y_t) , \\
dY_t
        &=              \alpha(Y_t) dt + \sig(Y_t) dW_t + \int_\Rb z d\Nt_t(Y_{t-}, dz), &
Y_0
        &=              y \in \Rb , \label{eq:dY} \\
\zeta
        &=              \inf \left\{ t \geq 0 : \int_0^t k(Y_s) ds = \Ec \right\} , &
\Ec
        &\sim   \text{Exp}(1) ,
\end{align}
where $W$ is a Brownian motion, $\Ec$ is an independent exponentially distributed random variable with parameter one, and $d\Nt_t(Y_{t-}, dz)$ is a state-dependent compensated Poisson random measure
\begin{align}
d\Nt_t(Y_{t-}, dz)
        &=              dN_t(Y_{t-}, dz) - \nu(Y_{t-},dz) , &
\Eb [dN_t(Y_{t-}, dz)|Y_{t-}]
        &=              \nu(Y_{t-},dz) dt .
\end{align}
The volatility, killing, and drift functions, as well as the state-dependent L\'evy measure are given by
\begin{align}
\sig(y)
        &=              \Big( a_0^2 + \eps a_1^2 \eta(y) \Big)^{1/2} , \label{eq:sig} \\
k(y)
        &=              c_0 + \eps c_1 \eta(y) , \label{eq:k} \\
\nu(y,dz)
        &=              \nu_0(dz) + \eps \eta(y) \nu_1(dz) , \label{eq:nu} \\
\alpha(y)
        &=              k(y) - \frac{1}{2} \sig^2(y) - \int_\Rb \nu(y, dz)\Big( e^z - 1 - z \Big) . \label{eq:alpha}
\end{align}
Here, $(a_0,a_1,c_0,c_1,\eps)$ are non-negative constants and the function $\eta$ belongs to $\Sc$, the Schwartz space of rapidly decreasing functions on $\Rb$:
\begin{align}
\Sc
        &=              \{ f \in C^\infty(\Rb): \norm{f}_{\alpha,\beta}< \infty, \ \text{for all }(\alpha, \beta) \in\mathbb{N}^2\} , &
\norm{f}_{\alpha,\beta}
        &:=              \sup_{y \in \Rb} |y^\alpha \d^\beta f(y)| .                             \label{eq:schwartz}
\end{align} 
The function $\eta$ must be such that  
$\sig(y) > 0$, 
$k(y) \geq 0$ and $\nu(y,A) \geq 0$ for all $y \in \Rb$ and all Borel sets $A$.  Finally, we assume that the locally-dependent L\'evy measure $\nu(y,dz)$ satisfies, for any $y\in\Rb$, 
\begin{equation}\label{eq:defLevyMeasure}
\int_\Rb \min(1,z^2) \nu(y,dz) < \infty, \qquad
\nu(y,\{0\}) = 0,
\end{equation}
\begin{equation}\label{eq:PropLevyMeasure}
\int_{|z| \geq 1} e^z \nu(y,dz) < \infty, \qquad
\int_{|z| \geq 1} |z| \nu(y,dz)  < \infty.
\end{equation}
Conditions~\eqref{eq:defLevyMeasure} are part of the definition of a L\'evy measure 
while the conditions~\eqref{eq:PropLevyMeasure} relate to the existence of moments greater than one, see in particular item~4 below.
Note further that these three conditions also hold for both $\nu_0$ and $\nu_1$.
We denote by $\Fc_t^Y$ the filtration generated by $Y$.  Note that $\zeta$, which represents the default time of $X$, is not $\Fc_t^Y$-measurable.  Thus, we introduce an indicator process $D_t := \Ib_{\{\zeta \leq t\}}$ in order to keep track of the event $\{\zeta \leq t \}$.  We denote by $\Fc_t^D$ the filtration generated by $D$.  The filtration of a market observer, then, is $\Fc_t = \Fc_t^Y \vee \Fc_t^D$.  The main features of the class of models described above are as follows:
\begin{enumerate}
\item \textbf{Local volatility: } 
the process $Y$ has a local volatility component: $\sig(y)=\( a_0^2 + \eps a_1^2 \eta(y) \)^{1/2}$.
\item \textbf{Local L\'evy measure: }
jumps in $Y$ of size $dz$ arrive with a state-dependent intensity described by the local L\'evy measure $\nu(y,dz)$.  The L\'evy measure has the decomposition $\nu(y,dz)=\nu_0(dz) + \eps \eta(y) \nu_1(dz)$, where $\nu_0$ and $\nu_1$ are both L\'evy measures.  
Note that both the jump intensity and the jump distribution can change depending on the value of $y$.
\item \textbf{Local default intensity: }
the underlying asset $X$ can default (i.e., for any $t>0$, $\mathbb{P}(X_t=0)>0$) with a state-dependent default intensity of $k(y):=c_0 + \eps c_1 \eta(y) $.
\item \textbf{Martingale: }
the conditions above ensure that $\Eb(X_t)$ is finite for any $t \geq 0$.  
The drift function $\alpha$ is fixed by the L\'evy measure, the volatility and the killing functions,
ensuring that $X$ is a martingale.
\item \textbf{Existence: }
the L\'evy-It\^o SDE \eqref{eq:dY} has a unique strong solution 
(see Theorem~1.19 in~\citet*{oksendal2}).
\end{enumerate}

%
%

\section{Option pricing}
\label{sec:pricing}
Let $V_t$ be the value at time $t$ of a European derivative, expiring at time $T>t$ with payoff $H(X_T)$.  
For convenience, we introduce the function $h:\mathbb{R}\ni y\mapsto H(e^{y})$ with $K:=H(0)$.
Using risk-neutral pricing, $V_t$ is expressed as the conditional expectation of the option payoff
\begin{align}
V_t
        &=      \Eb \[ H(X_T) | \Fc_t \] \\
        &=      \Eb \[ h(Y_T) \Ib_{\{\zeta>T\}} | \Fc_t \] + K \Eb \[ \Ib_{\{\zeta \leq T\}}| \Fc_t \] \\
        &=      \Eb \[ h(Y_T) \Ib_{\{\zeta>T\}} | \Fc_t \] + K - K \Eb \[ \Ib_{\{\zeta > T\}}| \Fc_t \] \\
        &=      \Ib_{\{\zeta>t\}} \Eb \[ h(Y_T) e^{-\int_t^T k(Y_s) ds} | \Fc_t^Y \] 
                        + K - K \Ib_{\{\zeta>t\}} \Eb \[ e^{-\int_t^T k(Y_s) ds} | \Fc_t^Y \] \\
        &=      \Ib_{\{\zeta>t\}} \Eb \[ h(Y_T) e^{-\int_t^T k(Y_s) ds} | Y_t \] 
                        + K - K \Ib_{\{\zeta>t\}} \Eb \[ e^{-\int_t^T k(Y_s) ds} | Y_t \]  , \label{eq:V}
\end{align}
where we have used
\begin{align}
\Eb \[  h(Y_T) \Ib_{\{\zeta>T\}} | \Fc_t \]
        &=      \Ib_{\{\zeta>T\}} \Eb \[  h(Y_T) \Eb[ \Ib_{\{\zeta>T\}} | \Fc_T^Y \vee \Fc_t ] | \Fc_t \] \\
        &=      \Ib_{\{\zeta>T\}} \Eb \[  h(Y_T) e^{-\int_t^T k(Y_s) ds} | \Fc_t \] \\
        &=      \Ib_{\{\zeta>T\}} \Eb \[  h(Y_T) e^{-\int_t^T k(Y_s) ds} | \Fc_t^Y \] .
\end{align}
Using the time-homogeneity of $Y$, it is clear that
$$
\Eb \[ h(Y_T) e^{-\int_t^T k(Y_s) ds} | Y_t = y \]
        =      \Eb_y \[h(Y_{T-t}) e^{-\int_0^{T-t} k(Y_s) ds}\] ,
$$
where the notation $\Eb_y$ means $\Eb[ \cdot | Y_0 = y ]$.  
Thus, to value a European-style derivative, we must compute expectations of the form
\begin{equation} \label{eq:u.eps.def}
u^\eps(t,y) := \Eb_y \left[h(Y_t) e^{-\int_0^{t} k(Y_s) ds}\right].
\end{equation}
We explicitly indicate the dependence of $u^\eps(t,y)$ on the parameter $\eps$, which will play a key role in the regular perturbation analysis below.  
The function~$u^\eps(t,y)$ in~\eqref{eq:u.eps.def} satisfies the Kolmogorov backward equation
\begin{align}
\left(-\d_t + \Ac^\eps \right) u^\eps &= 0,  &
	&\text{with boundary condition }
u^\eps(0,y) = h(y), \label{eq:u.eps.PIDE}
\end{align}
where the infinitesimal generator $\Ac^\eps$ is defined by
\begin{equation}\label{eq:A.limit}
\Ac^\eps f(x)
        =      \lim_{t \to 0^+} \frac{1}{t} \( \Eb_y\left[ f(Y_t) e^{-\int_0^{t} k(Y_s) ds}\right] - f(y) \), 
\qquad\text{whenever the limit exists}.
\end{equation}
If $f \in C_0^2(\Rb)$, then the limit \eqref{eq:A.limit} does exist and the generator $\Ac^\eps$ has the explicit representation
\begin{align}
\Ac^\eps
        &=              \Ac_0 + \eps \eta \Ac_1 , \label{eq:A.eps} \\
\Ac_i
        &=              \frac{1}{2} a_i^2 \( \d^2 - \d \) 
                                + c_i \( \d - 1 \)
                                - \int_\Rb \nu_i(dz) \Big(e^z - 1 - z \Big) \d
                                + \int_\Rb \nu_i(dz) \Big( \theta_z - 1 - z \d \Big) , &
i
        &\in    \{0,1\} , \label{eq:A.i}
\end{align}
where $\d$ (without the subscript $t$) indicates differentiation with respect to $y$ 
and $\theta_z$ is the \emph{shift operator}: $\theta_z f(y) = f(y+z)$.  
We define $\dom{\Ac^\eps}$ as the set of functions $u$ such that the derivatives and integrals appearing in $\Ac^\eps u$ with $\Ac^\eps$ given by \eqref{eq:A.eps}-\eqref{eq:A.i} exist and are finite.
\begin{remark}
\label{rmk:A}
By \citet*{jacod1987limit}, Definition II.8.25, Proposition II.8.26, the operators~$\Ac_0$ and~$\Ac_1$ correspond to infinitesimal generators of L\'evy processes which are exponentially special semimartingales.
\end{remark}
\begin{assumption}
We assume the existence of a unique classical solution to the Cauchy problem \eqref{eq:u.eps.PIDE}.  
A sufficient (but not necessary) condition for its existence is that the payoff function~$h$ and its first two derivatives are bounded (see Theorem 3.2, Chapter 3 in~\citet*{bensoussan1984impulse}).
\end{assumption}
From~\eqref{eq:A.eps}, since the operator $\Ac^\eps$ decomposes into $\Oc(1)$ and $\Oc(\eps)$ terms, 
we seek a solution to the Cauchy problem \eqref{eq:u.eps.PIDE} of the form
\begin{align}
u^\eps
        &=              \sum_{n=0}^\infty \eps^n u_n . \label{eq:u.eps.expand}
\end{align}
Conditions under which this expansion is valid will be given in Theorem \ref{thm.u.eps}.  Inserting the expansion \eqref{eq:u.eps.expand} into the Cauchy problem \eqref{eq:u.eps.PIDE} and collecting terms of like powers of $\eps$ we find
\begin{align}
\Oc(1):&&
( - \d_t + \Ac_0 ) u_0
        &=              0  , &
u_0(0,y)
        &=              h(y) , \label{eq:u0.PIDE} \\
\Oc(\eps^n):&&
( - \d_t + \Ac_0 ) u_n
        &=              - \eta \Ac_1 u_{n-1} , &
u_n(0,y)
        &=              0, &
				&\text{for }n\geq 1 . \label{eq:un.PIDE}
\end{align}
To solve the above Cauchy problems, it will be convenient to introduce the notations
\begin{equation}\label{eq:L2.norm}
\< u , v \> := \int_\Rb \overline{u(y)} v(y)dy 
\qquad\text{and}\qquad
\norm{u}^2 := \< u,u \>.
\end{equation}
Note that the inner product $\< u,v \>$ may be infinite.  
We also introduce $\Ad_i$, the \emph{formal adjoint} of $\Ac_i$ defined via the relation
$\< u , \Ac_i v \> = \< \Ad_i u , v \>$,
for any Schwartz functions~$u$ and~$v$.
Explicitly, $\Ad_i$ is given by
\begin{equation}\label{eq:Ad.i}
\Ad_i = \frac{1}{2} a_i^2 \( \d^2 + \d \) 
                                + c_i \( - \d - 1 \)
                                + \int_\Rb \nu_i(dz) \(e^z - 1 - z \) \d
                                + \int_\Rb \nu_i(dz) ( \theta_{-z} - 1 + z \d), 
\end{equation}
for $i=0,1$, which can be deduced through integrating by parts.  We note the following important relations
\begin{align}
\Ac_0 \psi_\lam
        &=              \phi_\lam \psi_\lam , &
\Ad_0 \overline{\psi_\lam}
        &=              \phi_\lam \overline{\psi_\lam} , &
\Ac_1 \psi_\lam
        &=              \chi_\lam \psi_\lam , &
\Ad_1 \overline{\psi_\lam}
        &=              \chi_\lam \overline{\psi_\lam},
\end{align}
where
\begin{align}
\psi_\lam(y)
        &:=              \frac{1}{\sqrt{2\pi}} e^{i \lam y} , \label{eq:psi}\\
\phi_\lam
        &:=              \frac{1}{2} a_0^2 \( -\lam^2 - i \lam \) 
                                + c_0 (i\lam - 1)
                                - \int_\Rb \nu_0(dz) \Big( e^{z} - 1 - z \Big) i \lam 
                                + \int_\Rb \nu_0(dz) \Big( e^{ i \lam z} - 1 - i \lam z \Big) , \\
\chi_\lam
        &:=              \frac{1}{2} a_1^2 \( -\lam^2 - i \lam \) 
                                + c_1 (i\lam - 1)
                                - \int_\Rb \nu_1(dz) \Big( e^{z} - 1 - z \Big) i \lam 
                                + \int_\Rb \nu_1(dz) \Big( e^{ i \lam z} - 1 - i \lam z \Big) .
\end{align}
Note that for any function $u$ and any complex number $\lam:=\lam_r + i\lam_i \in \Cb$ such that $\< \psi_\lam , u\>$ is finite, we have the \emph{generalized} Fourier representation
\begin{equation}\label{eq:identity}
u(y) = \int_\Rb d\lam_r \< \psi_\lam, u \> \psi_\lam(y).
\end{equation}
However, whenever $u\in L^2(\Rb)$, such a generalized representation is not necessary, 
and the simpler form (with $\lambda\in\Rb$)
$u(y) = \int_\Rb d\lam \< \psi_\lam, u \> \psi_\lam(y)$ suffices.
We are now in a position to find an explicit solution to~\eqref{eq:u0.PIDE}-\eqref{eq:un.PIDE}.
\begin{proposition}
\label{thm:un}
Suppose that $u_0$ satisfies \eqref{eq:u0.PIDE}.
Then the sequence $(u_n)$ defined in~\eqref{eq:un.PIDE} reads
\begin{align}
u_n(t,y)
        &=              \underbrace{ \int_\Rb \cdots \int_\Rb}_{n+1} \( \prod_{k=0}^n d\lam_k \)
                                \( \sum_{k=0}^n \frac{e^{t \phi_{\lam_k}}}{\prod_{j \neq k}^n (\phi_{\lam_k}-\phi_{\lam_j})}\)
                                \( \prod_{k=0}^{n-1} \chi_{\lam_k} \< \psi_{\lam_{k+1}}, \eta \psi_{\lam_k} \> \) \< \psi_{\lam_0},h \> \psi_{\lam_n}(y) , 
                                \label{eq:u.n}
\end{align}
where $\prod_{k=0}^{-1} (\cdots) = 1$ and $\prod_{j \neq k}^0 (\cdots) = 1$ by convention.
\end{proposition}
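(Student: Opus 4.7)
The plan is to prove the formula by induction on $n$, combining Duhamel's principle for the inhomogeneous Cauchy problem~\eqref{eq:un.PIDE} with the generalized Fourier representation~\eqref{eq:identity} and the spectral relations $\Ac_0 \psi_\lam = \phi_\lam \psi_\lam$, $\Ac_1 \psi_\lam = \chi_\lam \psi_\lam$. Throughout, one uses that $\eta \in \Sc$ so that every inner product $\<\psi_{\lam_{k+1}},\eta\psi_{\lam_k}\>$ is rapidly decaying in the spectral variables, which provides the absolute integrability needed to interchange operators with Fourier integrals.

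For the base case $n=0$, I would decompose $h(y) = \int d\lam_0 \< \psi_{\lam_0},h\> \psi_{\lam_0}(y)$ and seek a solution of the form $u_0(t,y) = \int d\lam_0 \< \psi_{\lam_0},h\> f(t,\lam_0) \psi_{\lam_0}(y)$. Substituting into~\eqref{eq:u0.PIDE} and using $\Ac_0 \psi_{\lam_0} = \phi_{\lam_0} \psi_{\lam_0}$ reduces to $\d_t f = \phi_{\lam_0} f$ with $f(0,\lam_0)=1$, giving $f(t,\lam_0) = e^{t \phi_{\lam_0}}$. This matches~\eqref{eq:u.n} with $n=0$, since the two empty products equal one and the sum collapses to the single term $e^{t\phi_{\lam_0}}$.

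For the inductive step, suppose the formula holds for $u_{n-1}$, and apply Duhamel's formula to the generator $\Ac_0$:
\begin{equation*}
u_n(t,y) = \int_0^t e^{(t-s)\Ac_0} \bigl[\eta\, \Ac_1 u_{n-1}(s,\cdot)\bigr](y)\,ds.
\end{equation*}
I would unfold this in three steps. First, acting with $\Ac_1$ on the representation of $u_{n-1}$ replaces the outer mode $\psi_{\lam_{n-1}}$ by $\chi_{\lam_{n-1}}\psi_{\lam_{n-1}}$. Second, expanding $\eta(y)\psi_{\lam_{n-1}}(y) = \int d\lam_n \<\psi_{\lam_n},\eta\psi_{\lam_{n-1}}\>\psi_{\lam_n}(y)$ via~\eqref{eq:identity} introduces the new variable $\lam_n$ and the factor $\<\psi_{\lam_n},\eta\psi_{\lam_{n-1}}\>$, completing the product $\prod_{k=0}^{n-1}\chi_{\lam_k}\<\psi_{\lam_{k+1}},\eta\psi_{\lam_k}\>$ appearing in~\eqref{eq:u.n}. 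Third, acting with $e^{(t-s)\Ac_0}$ multiplies by $e^{(t-s)\phi_{\lam_n}}$, and the Fubini-justified $s$-integral gives, for each summand indexed by $k\in\{0,\dots,n-1\}$,
\begin{equation*}
\int_0^t e^{s\phi_{\lam_k}} e^{(t-s)\phi_{\lam_n}}\,ds \;=\; \frac{e^{t\phi_{\lam_k}} - e^{t\phi_{\lam_n}}}{\phi_{\lam_k}-\phi_{\lam_n}}.
\end{equation*}

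It then remains to reconcile the resulting sum over $k\in\{0,\dots,n-1\}$ with the target sum over $k\in\{0,\dots,n\}$ in~\eqref{eq:u.n}. This reduces to the purely algebraic identity
\begin{equation*}
\sum_{k=0}^{n-1} \frac{1}{\prod_{j\neq k}^{n-1}(\phi_{\lam_k}-\phi_{\lam_j})}\cdot\frac{e^{t\phi_{\lam_k}}-e^{t\phi_{\lam_n}}}{\phi_{\lam_k}-\phi_{\lam_n}} \;=\; \sum_{k=0}^{n}\frac{e^{t\phi_{\lam_k}}}{\prod_{j\neq k}^{n}(\phi_{\lam_k}-\phi_{\lam_j})},
\end{equation*}
which I would verify by splitting the left-hand side according to the numerators $e^{t\phi_{\lam_k}}$ and $-e^{t\phi_{\lam_n}}$: the first piece supplies the $k\leq n-1$ terms on the right directly (the factor $1/(\phi_{\lam_k}-\phi_{\lam_n})$ fills in the missing $j=n$ entry of the product), while the $-e^{t\phi_{\lam_n}}$ piece collapses to the $k=n$ term via the partial-fraction decomposition of $1/\prod_{j=0}^{n-1}(x-\phi_{\lam_j})$ evaluated at $x=\phi_{\lam_n}$. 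The main obstacle here is not any deep analytical point but purely bookkeeping: keeping track of the multi-index structure and of the interchange of differential operators, spectral integrals and the time integral, all of which are legitimised by the Schwartz decay of $\eta$.
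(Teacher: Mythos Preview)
Your proposal is correct and follows essentially the same route as the paper: Fourier-diagonalize $\Ac_0$, apply Duhamel's formula for the inhomogeneous equation, and iterate. In fact you are more thorough than the paper, which computes $u_1$ explicitly and then simply asserts that ``generalizing the above recursion relation to arbitrary $n$, one finds expression~\eqref{eq:u.n}''; your explicit verification of the partial-fraction identity that collapses the $s$-integral into the Lagrange-interpolation-type sum is exactly the missing bookkeeping the paper leaves to the reader.
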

\begin{proof}
See Appendix \ref{sec:un}.
\end{proof}
\noindent
We have obtained a formal expansion (\eqref{eq:u.eps.expand} and \eqref{eq:u.n}) for $u^\eps$.  
The following theorem provides precise conditions under which the expansion is guaranteed to be valid.
From now on, we  shall denote by~$L^2(\Rb)$ the set of all real functions which are square integrable with respect to the Lebesgue measure.
\begin{theorem}
\label{thm.u.eps}
Suppose $h \in L^2(\Rb) \cap \dom{\Ac^\eps}$.  Suppose further that for any $u \in L^2(\Rb)$
\begin{align}
\int_\Rb d\lam |\phi_\lam|^2 |\<\psi_\lam,u\>|^2 < \infty
        &\qquad\text{implies}\qquad \int_\Rb d\lam |\chi_\lam|^2 |\<\psi_\lam,u\>|^2 < \infty, \label{eq:implies}
\end{align}
and that there exist two real constants $A\geq 0$ and $B\leq 1$ (independent of $(t,y)$) such that
\begin{align}
\eps^2
        &\leq    \inf_{\lam \in \Rb} \frac{A^2 + B^2 |\phi_\lam|^2}{\left\|\eta \right\|^2 \cdot | \chi_\lam |^2}. \label{eq:eps.bound}
\end{align}
Then the option price~$u^\eps(t,y)$ is an analytic function of~$\eps$
and its power series expansion is given by~\eqref{eq:u.eps.expand} 
where the sequence~$\{u_n\}_{n=0}^\infty$ is given by~\eqref{eq:u.n}.
The sequence of partial sums $u^{(N)}(t,y):= \sum_{n=0}^N \eps^n u_n(t,y)$ converges uniformly (with respect to $\eps$) to the exact price $u^\eps(t,y)$.
\end{theorem}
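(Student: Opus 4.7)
The plan is to interpret the formal expansion \eqref{eq:u.eps.expand} as the Dyson--Duhamel series for the semigroup generated by $\Ac^\eps = \Ac_0 + \eps\eta\Ac_1$, working in the $L^2(\Rb)$-Fourier framework in which $\Ac_0$ and $\Ac_1$ are diagonalised by the plane waves $\psi_\lam$. A direct computation of the real part gives
\begin{equation*}
\mathrm{Re}(\phi_\lam) = -\tfrac{1}{2} a_0^2 \lam^2 - c_0 + \int_\Rb \bigl(\cos(\lam z)-1\bigr)\nu_0(dz) \leq 0,
\end{equation*}
so that the semigroup $S_0(t):=e^{t\Ac_0}$ is a contraction on $L^2(\Rb)$. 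By Duhamel, the Cauchy problems \eqref{eq:u0.PIDE}--\eqref{eq:un.PIDE} are equivalent to $u_0(t) = S_0(t)h$ and, for $n \geq 1$,
\begin{equation*}
u_n(t) = -\int_0^t S_0(t-s)\bigl[\eta\,\Ac_1\,u_{n-1}(s)\bigr]\,ds,
\end{equation*}
and unfolding this recursion through Plancherel together with the partial-fraction decomposition of the time-ordered exponential $\int_{0\leq s_1\leq\cdots\leq s_n\leq t}\prod e^{(s_{k+1}-s_k)\phi_{\lam_k}}\,ds$ reproduces the explicit formula \eqref{eq:u.n} of Proposition~\ref{thm:un}. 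The implication \eqref{eq:implies} is exactly what is needed to ensure that $\eta\Ac_1 u$ lies in $L^2(\Rb)$ whenever $u$ does and $\Ac_0 u \in L^2$, so the iteration makes sense on $\dom{\Ac_0}$.

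The decisive analytic ingredient is a Kato-type relative bound extracted from \eqref{eq:eps.bound}. Plancherel gives $\|\Ac_0 u\|^2 = \int_\Rb |\phi_\lam|^2 |\<\psi_\lam,u\>|^2 d\lam$ and $\|\Ac_1 u\|^2 = \int_\Rb |\chi_\lam|^2 |\<\psi_\lam,u\>|^2 d\lam$. Integrating the hypothesis $\eps^2 \|\eta\|^2 |\chi_\lam|^2 \leq A^2 + B^2 |\phi_\lam|^2$ against $|\<\psi_\lam,u\>|^2$ yields
\begin{equation*}
\eps^2 \|\eta\|^2 \|\Ac_1 u\|^2 \leq A^2 \|u\|^2 + B^2 \|\Ac_0 u\|^2, \qquad B \leq 1,
\end{equation*}
and combining with the elementary $\|\eta v\|_2 \leq \|\eta\|_\infty \|v\|_2$ (finite because $\eta \in \Sc$) produces an operator-level bound of the form $\|\eps\eta\Ac_1 u\| \leq A'\|u\| + B'\|\Ac_0 u\|$. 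Iterating the Duhamel recursion with this bound and the contractivity of $S_0(t)$ yields by induction a geometric-type estimate $\|u_n(t)\| \leq C^n t^n (\|h\|+\|\Ac_0 h\|)/n!$ with $C$ independent of $\eps$, so that $\sum_{n\geq 0} \eps^n \|u_n(t)\|$ is majorised by $(\|h\|+\|\Ac_0 h\|)\,e^{C\eps t}$ and the partial sums $u^{(N)}$ converge absolutely in $L^2(\Rb)$, uniformly over all admissible $\eps$ in the range \eqref{eq:eps.bound}.

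To identify the limit, a telescoping computation using $(-\d_t+\Ac_0)u_0 = 0$ and $(-\d_t+\Ac_0)u_n = -\eta\Ac_1 u_{n-1}$ gives
\begin{equation*}
(-\d_t + \Ac^\eps)\,u^{(N)} = \eps^{N+1}\,\eta\,\Ac_1\,u_N, \qquad u^{(N)}(0,\cdot) = h,
\end{equation*}
and the right-hand side vanishes in $L^2$ as $N\to\infty$ by the estimates of the previous step. Passing to the limit and invoking the assumed uniqueness of the classical solution to \eqref{eq:u.eps.PIDE} identifies the sum with $u^\eps$, and analyticity of $u^\eps$ in $\eps$ is immediate from its power-series representation. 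The main obstacle, in my view, is the passage from the pointwise frequency bound \eqref{eq:eps.bound} to an operator-norm bound on $\eps\eta\Ac_1$ with a constant strictly below one that is compatible with iteration: this requires careful juggling between the $L^2$ norm of $\eta$ (appearing in \eqref{eq:eps.bound}) and the $L^\infty$ norm needed in the multiplication inequality, and verifying that condition \eqref{eq:implies} propagates through successive applications of $\eta\Ac_1$ and $S_0$.
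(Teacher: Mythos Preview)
Your derivation of the relative bound from \eqref{eq:eps.bound} via Plancherel matches the paper's Proposition~\ref{thm:relative.bound} almost verbatim, and your identification of the limit via the telescoping residual $(-\d_t+\Ac^\eps)u^{(N)}=\eps^{N+1}\eta\Ac_1 u_N$ is a clean way to close the argument once convergence is in hand. The substantive difference is how convergence itself is obtained.

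The paper does not attempt a direct estimate on the Dyson terms. Instead it invokes Chernoff's perturbation theorem: given that $\Ac_0$ generates a $C_0$-contraction semigroup, that $\eps\eta\Ac_1$ is dissipative with densely defined adjoint, and that the relative bound holds with $B\leq 1$, the closure of $\Ac^\eps$ generates a contraction semigroup $\Pc_t^\eps=\exp(t\Ac^\eps)$. Analyticity in $\eps$ is then read off from the resolvent: $\eps\eta\Ac_1(\lam-\Ac_0)^{-1}$ is bounded with norm $\leq A/|\lam|+B$, so for suitable $\lam$ the Neumann series for $(\lam-\Ac^\eps)^{-1}$ converges and is analytic in $\eps$, whence so is $\Pc_t^\eps h$.

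Your route has a genuine gap at the factorial estimate $\|u_n(t)\|\leq C^n t^n(\|h\|+\|\Ac_0 h\|)/n!$. This would be immediate if $\eta\Ac_1$ were bounded, but it is only $\Ac_0$-bounded. Iterating Duhamel requires controlling $\|\Ac_0 u_{n-1}(s)\|$, and to propagate that through the next step you would need $\|\Ac_0\eta\Ac_1 u_{n-2}\|$, i.e.\ a composition of two unbounded operators for which no bound is available from the hypotheses. The simplex volume $t^n/n!$ cannot be extracted unless each factor in the time-ordered product is a bounded operator, and here it is not. The difficulty you flag at the end about juggling $L^2$ versus $L^\infty$ norms of $\eta$ is real but secondary; the structural obstruction is that Kato-type relative bounds do not by themselves yield factorial Dyson convergence. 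This is precisely why the paper outsources the work to Chernoff's theorem rather than estimating the series directly.
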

\begin{proof}
See Appendix \ref{sec:proof2}.
The last convergence statement simply follows from the fact that every power series converges uniformly within its radius of convergence.
\end{proof}
\begin{remark}
\label{rmk:u.exact}
We wish to rectify a common misperception.  Under the conditions of Theorem \ref{thm.u.eps}, the series expansion \eqref{eq:u.eps.expand} is the \emph{exact} option price $u^\eps(t,y)$.  It is \emph{not} an asymptotic approximation.
\end{remark}
\begin{remark}[Feynman-Kac transition densities]
Since the diffusion component~$\sigma$ of $Y$ is non-zero, as assumed in Section~\ref{sec:assumptions},
the function $u^\eps(t,y)$ can be written as an integral with respect to a density
\begin{align}
u^\eps(t,y) 
        &:=     \Eb_y \left(h(Y_t) e^{-\int_0^t k(Y_s) ds}\right)
        =       \int_\Rb h(z) p^\eps(t,y,z) dz . \label{eq:fk.density}
\end{align}
The density $p^\eps(t,y,z)$ is called the \emph{Feynman-Kac} (FK) transition density.  
However it is not a probability density since, due to the killing function $k(y)$, it is norm-defecting, i.e.
$\int_\Rb p^\eps(t,x,z) dz \leq 1$.
If one sets the payoff function $h=\del_{z}$, then $u(t,x)$ becomes the FK density $p(t,x,z)$ since
$\int_\Rb \del_z(z') p^\eps(t,x,z') dz' =      p^\eps(t,x,z)$.
Strictly speaking, the Dirac delta $\del_z$ is not in $L^2(\Rb)$, but is a densely defined unbounded linear functional on the Hilbert space $L^2(\Rb)$. 
Its action on functions in $L^2(\Rb)$ is well-defined.
In particular, by making the replacement $\< \psi_{\lam_0}, h \> \to \< \psi_{\lam_0},\del_z \>=\tfrac{1}{\sqrt{2\pi}}e^{-i\lam_0 z}$ in \eqref{eq:u.n}, one obtains $p^\eps(t,y,z)$.
\end{remark}
\begin{remark}[European calls and puts]
The most common European options---calls and puts---have payoffs~$h$ which do not belong to~$L^2(\Rb)$.  
Assuming the expectation~\eqref{eq:u.eps.def} is finite, one can still obtain the price of such an option by integrating the payoff against the FK density $p^\eps$, as in~\eqref{eq:fk.density}.  
However, a more computationally convenient means of obtaining the option price is to use the method of generalized Fourier transforms.  
Note that, even when $h \notin L^2(\Rb)$, the inner product $\< \psi_\lam, h \>$ appearing in~\eqref{eq:u.n} can sometimes be made finite by fixing an imaginary component of $\lam$.  
A European call option, for example, has a payoff $h(y) \equiv (e^y - e^k)^+$, which has a generalized Fourier transform
\begin{align}
\< \psi_\lam ,h \>
                &=              \int_\Rb dy \tfrac{1}{\sqrt{2\pi}}e^{-i \lam y} \left(e^y - e^k\right)^+ 
                =                       \frac{-e^{k-i k \lam}}{\sqrt{2 \pi } \(i \lam + \lam^2\)} ,
\label{eq:hhat}
\end{align}
where $\lam = \lam_r + i \lam_i$ and $\lam_i\in (-\infty, -1)$.
As such, one can still use~\eqref{eq:u.n} to compute call options.
Indeed one simply fixes an imaginary component $\Im[\lam_0] < 1$ 
and integrates with respect to the real part $\Re[\lam_0]$.
\end{remark}

%
%

\section{Implied volatility}
\label{sec:impvol}
For European calls and puts, it is often the implied volatility induced by an option price, rather than the option price itself, that is of primary importance.  
It is therefore fundamental to be able to compute them.  
In this section, we derive an implied volatility expansion for a certain sub-class of the model above.
\begin{assumption}
\label{ass:phi}
In \underline{this section only} we assume $\nu_0 \equiv 0$ and $c_0=0$, which implies
$\phi_\lam \equiv -\frac{a_0^2}{2} \(\lam^2 + i \lam \)$.
\end{assumption}
To begin our implied volatility analysis, we fix a time to maturity $t>0$, an initial value of the underlying $Y_0=y$ and a call option payoff $h(y) = (e^y-e^k)^+$.  
Our goal is to find the implied volatility (defined below) for \emph{this particular option}.  
For ease of notation, throughout this section, we will suppress all dependence on $(t,y,k)$.  
The reader should keep in mind, however, that the implied volatility \emph{does} depend on these variables.  
We begin our analysis by defining the \emph{Black-Scholes price} and \emph{the implied volatility}.
\begin{definition}
The \emph{Black-Scholes Price} $u^{\BS}:\Rb^+ \to \Rb^+$, defined as a function of volatility $\sig$, is given by
\begin{align}
u^{\BS}(\sig)
        &:=             \int_{\Rb} d\lam_r e^{t \phi^{\BS}_\lam(\sig)} \<\psi_\lam, h\> \psi_\lam, &
\phi^{\BS}_\lam(\sig)
        &=              -\frac{\sig^2}{2}(\lam^2 + i \lam) . \label{eq:u.BS}
\end{align}
\end{definition}
\begin{remark}
\label{rmk:u0=uBS}
Note that Equation~\eqref{eq:u.n}, together with Assumption~\ref{ass:phi} imply that $u_0=u^{\BS}(a_0)$.
\end{remark}
\begin{remark}
Usually, the Black-Scholes price is written as
$u^{\BS}(\sig) = \int_\Rb p^{\BS}(t,y,z) h(z) dy$;
Expression~\eqref{eq:u.BS} is simply its Fourier representation.  
Here 
$p^{\BS}(t,y,z) \equiv \frac{1}{\sigma\sqrt{2 \pi t}} \exp \( \frac{(z - (y - \sig^2 t/2) )^2}{2 \sig^2 t}\)$
is the transition density of a Brownian motion with drift $-\sig^2/2$ and volatility $\sig$.  
We use the Fourier representation of $u^{\BS}$ as it will be more convenient for the analysis that follows.
\end{remark}
\begin{definition}\label{eq:imp.vol.def}
For an option price $u^\eps$, the implied volatility is defined implicitly as the unique number 
$\sig^\eps \in \Rb^+$ such that
$u^{\BS}(\sig^\eps) =u^\eps$. 
\end{definition}
\begin{remark}
\label{rmk:exist}
For any $t>0$, the existence and uniqueness of the implied volatility~$\sig^\eps$ follows from the general arbitrage bounds for call prices and the monotonicity of $u^{\BS}$ (\cite{fpss}, Section 2.1, Remark~(i)).
\end{remark}
\begin{remark}
\label{rmk:analytic}
For any $\sig_0>0$ and $\sig_0 + \del>0$, the function $u^{\BS}(\sig_0 + \del)$ is given by its Taylor series:
\begin{align}
u^{\BS}(\sig_0 + \del)
        &=      \sum_{n=0}^\infty \frac{\del^n}{n!} \d_\sig^n u^{\BS}(\sig_0), &
\d_\sig^n u^{\BS}(\sig_0)
        &=              \int_\Rb d\lam_r \( \d_\sig^n e^{t \phi^{\BS}_\lam(\sig_0)} \) \< \psi_\lam, h\> \psi_\lam . \label{eq:d.sigma}
\end{align}
Observe also that, by monotonicity of $u^{\BS}$ we have $\d_\sig u^{\BS}(\sig)> 0$ for all $\sig>0$.  Therefore, $u^{\BS}$ is an invertible analytic function, as the following theorem shows.
\end{remark}
\begin{theorem}[Lagrange Inversion Theorem]
\label{thm:lagrange}
Suppose $u$ is defined as a function of $\sig$ through the equation $u^{\BS}(\sig)=u$, where $u^{\BS}$ is analytic at a point $\sig_0$ and $\d_\sig u^{\BS}(\sig_0) \neq 0$.  Then it is possible to solve for $\sig$ on a neighborhood of $u^{\BS}(\sig_0)$ where $[u^{\BS}]^{-1}$ is analytic:
\begin{align}
\sig
        &=      \[u^{\BS}\]^{-1}(u)
        =               \sig_0 + \sum_{n=1}^\infty \frac{b_n}{n!} \(u - u^{\BS}(\sig_0)\)^n ,  &
b_n
        &=      \lim_{\sig \to \sig_0} \d_\sig^{n-1} \( \frac{ \sig - \sig_0}{u^{\BS}(\sig)-u^{\BS}(\sig_0)} \)^n . \label{eq:inverse}
\end{align}
\end{theorem}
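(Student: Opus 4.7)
The plan is to recognize this as the classical Lagrange--B\"urmann inversion formula and prove it by combining the holomorphic inverse function theorem with a residue computation. First I would observe that, since $u^{\BS}$ is analytic at $\sig_0$ with $\d_\sig u^{\BS}(\sig_0)\neq 0$, the holomorphic inverse function theorem extends $u^{\BS}$ to a biholomorphism from some complex neighborhood $U$ of $\sig_0$ onto an open disk $D\subset\Cb$ about $u^{\BS}(\sig_0)$. This immediately delivers existence and analyticity of $[u^{\BS}]^{-1}$ on $D$, so that $[u^{\BS}]^{-1}$ has a convergent Taylor expansion about $u^{\BS}(\sig_0)$ of the form asserted, with coefficients $c_n/n!$ where $c_n := \bigl([u^{\BS}]^{-1}\bigr)^{(n)}(u^{\BS}(\sig_0))$. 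The whole task then collapses to showing $c_n = b_n$.

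Next, I would apply Cauchy's differentiation formula on a small positively oriented circle about $u^{\BS}(\sig_0)$ inside $D$, and pull it back through the biholomorphism via the substitution $u = u^{\BS}(\sig)$; this converts $c_n$ into a contour integral in $\sig$ around $\sig_0$ of $\sig\,\d_\sig u^{\BS}(\sig)/\bigl(u^{\BS}(\sig)-u^{\BS}(\sig_0)\bigr)^{n+1}$. A single integration by parts (the boundary terms vanish on the closed contour) reduces this to
\begin{equation}
c_n = \frac{(n-1)!}{2\pi i}\oint \frac{d\sig}{\bigl(u^{\BS}(\sig)-u^{\BS}(\sig_0)\bigr)^{n}}.
\end{equation}
Factoring $u^{\BS}(\sig)-u^{\BS}(\sig_0)=(\sig-\sig_0)\,g(\sig)$ with $g$ analytic at $\sig_0$ and $g(\sig_0)=\d_\sig u^{\BS}(\sig_0)\neq 0$, the integrand has a single pole of order $n$ at $\sig_0$, and the standard residue formula for such a pole reproduces exactly the expression for $b_n$ given in \eqref{eq:inverse}.

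I expect the work to be almost entirely bookkeeping rather than analysis. The substantive ingredients---the holomorphic inverse function theorem and the residue identity for an $n$-th order pole---are classical, and convergence of the series on a neighborhood of $u^{\BS}(\sig_0)$ is automatic from the analyticity of $[u^{\BS}]^{-1}$. The one step that requires real care is the change of variables $u = u^{\BS}(\sig)$: one must shrink the circle in the $u$-plane so that it lies inside the image of a disk on which $u^{\BS}$ is genuinely injective (guaranteed by step one), ensuring the pullback is a single well-defined Jordan curve around $\sig_0$. Once that bookkeeping is in place, the integration by parts and residue extraction are routine.
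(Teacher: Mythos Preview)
Your proposal is correct and in fact supplies a self-contained proof via the holomorphic inverse function theorem and a residue/Cauchy computation, which is the standard derivation of the Lagrange--B\"urmann formula. The paper, by contrast, does not prove the statement at all: it simply cites \citet*{as}, Equation 3.6.6, treating the Lagrange inversion theorem as a classical result to be quoted. So your approach is strictly more detailed than what the paper offers; there is nothing to compare beyond noting that you have written out what the paper delegates to a reference.
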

\begin{proof}
See \citet*{as}, Equation 3.6.6.
\end{proof}
\noindent
Theorem \ref{thm:lagrange} shows that, for every fixed $\sig_0>0$, the exists some radius of convergence $R>0$ such that $|u - u^{\BS}(\sig_0)|<R$ implies $\sig$, defined through $u^{\BS}(\sig)=u$, is given by \eqref{eq:inverse}.  The radius of convergence $R$ depends on the coefficients $b_n$, which, in general, are quite difficult to compute.  Note, from the expression for $b_n$, the radius of convergence $R$ depends on $(t,y,k)$ through the function $u^{\BS}$.

Recall that Theorem~\ref{thm.u.eps} shows that $u^\eps$ is an analytic function of $\eps$.  
Since the composition of two analytic functions is also analytic (\citet*{brown1996complex}, section 24, p. 74), Theorem \ref{thm:lagrange} implies that the implied volatility $\sig^\eps = [u^{\BS}]^{-1}(u^\eps)$ is an analytic function of $\eps$, and therefore has a power series expansion.  
We write this expansion as
\begin{align}
\sig^\eps
        &=              \sig_0 + \del^\eps , &
\del^\eps
        &=              \sum_{k=1}^\infty \eps^k \sig_k  . \label{eq:sigma.expand}
\end{align}
Taylor expanding $u^{\BS}$ about the point $\sig_0$ we have
\begin{align}
u^{\BS}(\sig^\eps)
        &=      u^{\BS}(\sig_0 + \del^\eps) \\
        &=              \sum_{n=0}^\infty \frac{1}{n!}(\del^\eps \d_\sig )^n u^{\BS}(\sig_0) \\
        &=              u^{\BS}(\sig_0) +
                                                \sum_{n=1}^\infty \frac{1}{n!} \( \sum_{k=1}^\infty \eps^k \sig_k \)^n \d_\sig^n u^{\BS}(\sig_0) \\
        &=              u^{\BS}(\sig_0) + 
                                                \sum_{n=1}^\infty \frac{1}{n!}  
                                                \[ \sum_{k=1}^\infty \( \sum_{j_1+\cdots+j_n=k} \prod_{i=1}^n \sig_{j_i} \) \eps^k \] \d_\sig^n u^{\BS}(\sig_0) \\
        &=              u^{\BS}(\sig_0) +
                                                \sum_{k=1}^\infty \eps^k 
                                                \[ \sum_{n=1}^\infty \frac{1}{n!} \( \sum_{j_1+\cdots+j_n=k} \prod_{i=1}^n \sig_{j_i} \) \d_\sig^n \] u^{\BS}(\sig_0) \\
        &=      u^{\BS}(\sig_0) +
                                                \sum_{k=1}^\infty \eps^k 
                                                \[ \sig_k \d_\sig + \sum_{n=2}^\infty \frac{1}{n!}\( \sum_{j_1+\cdots+j_n=k} \prod_{i=1}^n \sig_{j_i} \)  \d_\sig^n \]
                                                u^{\BS}(\sig_0) .                        \label{eq:u.bs.expand}
\end{align}
Now, we insert the expansions \eqref{eq:u.eps.expand} and \eqref{eq:u.bs.expand} 
into the definition~\ref{eq:imp.vol.def} and collect terms of like order in $\eps$:
\begin{align}
\Oc(1):&&
u_0
        &=              u^{\BS}(\sig_0) , \\
\Oc(\eps^k):&&
u_k
        &=              \sig_k \d_\sig u^{\BS}(\sig_0)
                                + \sum_{n=2}^\infty \frac{1}{n!}\( \sum_{j_1+\cdots+j_n=k} \prod_{i=1}^n \sig_{j_i} \)  \d_\sig^n  u^{\BS}(\sig_0) , &
k
        &\geq   1 .
\end{align}
Solving the above equations for the sequence $(\sig_k)_{k\geq 0}$ we find
\begin{equation}\label{eq:sig.k}
\left.
\begin{array}{rlll}
\Oc(1): 
& \sig_0 = a_0, & \\
\Oc(\eps^k):
 & \sig_k = \displaystyle \frac{1}{\d_\sig u^{\BS}(\sig_0)}
\( u_k - \sum_{n=2}^\infty \frac{1}{n!}\( \sum_{j_1+\cdots+j_n=k} \prod_{i=1}^n \sig_{j_i} \)  \d_\sig^n  u^{\BS}(\sig_0)\),
 & k \geq 1, 
\end{array}
\right.
\end{equation}
where we have used Remark \ref{rmk:u0=uBS} to deduce that $\sig_0 = a_0$.  
\begin{remark}
The sequence~$(\sig_k)_{k\geq 1}$ can be determined recursively since~\eqref{eq:sig.k} only depends on $(\sig_j)_{j \leq k-1}$.
\end{remark}
\begin{remark}
Note that $\d_\sig^n u^{\BS}(\sig)$ can be easily computed using \eqref{eq:d.sigma}.
\end{remark}
\noindent
Explicitly, up to $\Oc(\eps^4)$ we have
\begin{align}
\Oc(\eps):&&
\sig_1
        &=      \frac{u_1}{\d_\sig u_0}, \label{eq:sig.1} \\
\Oc(\eps^2):&&
\sig_2
        &=      \frac{u_2 - \tfrac{1}{2} \sig_1^2 \d_\sig^2 u_0}{\d_\sig u_0}, \label{eq:sig.2} \\
\Oc(\eps^3):&&
\sig_3
        &=      \frac{u_3 - (\sig_2 \sig_1 \d_\sig^2 + \tfrac{1}{3!}\sig_1^3 \d_\sig^3) u_0}{\d_\sig u_0}, \\
\Oc(\eps^4):&&
\sig_4
        &=  \frac{u_4 - (\sig_3 \sig_1 \d_\sig^2 + \tfrac{1}{2} \sig_2^2 \d_\sig^2 
                                                                        + \tfrac{1}{2} \sig_2 \sig_1^2 \d_\sig^3 + \tfrac{1}{24} \sig_1^4 \d_\sig^4) u_0}{\d_\sig u_0}.
\end{align}
We summarize our implied volatility result in the following theorem:
\begin{theorem}[Implied volatility]
\label{thm:imp.vol}
Let $R(t,y,k,a_0)$ denote the radius of convergence of the infinite series~\eqref{eq:inverse} 
with~$\sig_0 = a_0$.
Assume further that~$\eps$ satisfies~\eqref{eq:eps.bound} and that
$|u^\eps - u^{\BS}(a_0)| = \left|\sum_{n=1}^\infty \eps^n u_n\right| < R(t,y,k,a_0)$.
Then the implied volatility $\sig^\eps$ (Definition~\ref{eq:imp.vol.def}), 
is characterised by~\eqref{eq:sigma.expand}, 
with $(\sig_k)_{k\geq 0}^\infty$ given by~\eqref{eq:sig.k}.
\end{theorem}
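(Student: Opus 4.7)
The plan is to apply the Lagrange inversion theorem to invert $u^{\BS}$ around $\sig_0 = a_0$, and then view $\sig^\eps$ as the composition $[u^{\BS}]^{-1} \circ u^\eps$ of two functions that are analytic in $\eps$, so as to extract a convergent power series in $\eps$ and identify its coefficients by matching with the known expansion of $u^\eps$.

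First, under condition~\eqref{eq:eps.bound}, Theorem~\ref{thm.u.eps} gives the convergent representation $u^\eps = \sum_{n=0}^\infty \eps^n u_n$, and Remark~\ref{rmk:u0=uBS} ensures $u_0 = u^{\BS}(a_0)$. The hypothesis $|u^\eps - u^{\BS}(a_0)| < R(t,y,k,a_0)$ therefore places $u^\eps$ strictly inside the disc of convergence of the Lagrange series~\eqref{eq:inverse} centred at $u^{\BS}(a_0)$, so that Theorem~\ref{thm:lagrange} applies and $[u^{\BS}]^{-1}$ is analytic on a neighbourhood of $u^{\BS}(a_0)$ containing $u^\eps$. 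Since compositions of analytic functions are analytic, $\eps \mapsto \sig^\eps = [u^{\BS}]^{-1}(u^\eps)$ is analytic at $\eps = 0$ and therefore admits a convergent expansion of the form~\eqref{eq:sigma.expand}.

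It then remains to identify the coefficients $(\sig_k)_{k\geq 1}$. The Taylor expansion of $u^{\BS}$ around $\sig_0$ (Remark~\ref{rmk:analytic}) converges absolutely in a neighbourhood of $\sig_0$, and for $|\eps|$ small enough the tail $\del^\eps = \sum_{k\geq 1} \eps^k \sig_k$ lies in such a neighbourhood. I would substitute $\del^\eps$ into the Taylor series of $u^{\BS}$ and regroup by powers of $\eps$ via the multinomial identity underlying~\eqref{eq:u.bs.expand}, then compare the result with $u^\eps = \sum_n \eps^n u_n$. Strict positivity of $\d_\sig u^{\BS}(\sig_0)$ (Remark~\ref{rmk:analytic}) allows one to solve for $\sig_k$ in terms of $u_k$ and $(\sig_j)_{j < k}$, yielding precisely the recursion~\eqref{eq:sig.k}; the $\Oc(1)$ identity $u_0 = u^{\BS}(\sig_0)$ together with Remark~\ref{rmk:u0=uBS} and the monotonicity in Remark~\ref{rmk:exist} forces $\sig_0 = a_0$.

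The main technical obstacle is justifying the rearrangement step: one must interchange the double sum in~\eqref{eq:u.bs.expand} so as to regroup by powers of $\eps$ rather than of $\del^\eps$. This is however automatic from the absolute convergence of both the outer Taylor series (in $n$) and the inner series (in $k$) on a common disc in $\eps$, itself guaranteed by the joint domain of analyticity established in the first step. Once this Fubini-type exchange is secured, uniqueness of coefficients of convergent power series yields~\eqref{eq:sig.k} and concludes the argument.
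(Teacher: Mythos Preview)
Your proposal is correct and follows essentially the same route as the paper: the paper does not give a separate proof for this theorem but rather states it as a summary of the derivation immediately preceding it, which proceeds exactly as you describe (analyticity of $u^\eps$ in $\eps$ via Theorem~\ref{thm.u.eps}, analyticity of $[u^{\BS}]^{-1}$ via Lagrange inversion, composition of analytic functions, and then matching coefficients in the double expansion~\eqref{eq:u.bs.expand} against $\sum_n \eps^n u_n$). Your explicit mention of the Fubini-type rearrangement and its justification via joint absolute convergence is a welcome clarification that the paper leaves implicit.
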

\begin{remark}
\label{rmk:sig.exact}
We emphasize that, within the radius of convergence, the implied volatility expansion is \underline{exact}.  
It is not an asymptotic approximation.  
That is, for fixed $(t,y,k)\in \Rb_+\times\Rb\times\Rb$, the sequence of partial sums $\sig^{(N)}:= \sum_{n=0}^N \eps^n \sig_n$ converges to the exact implied volatility $\sig^\eps$ (i.e., we have pointwise convergence).  
The convergence is uniform with respect to $\eps$ (since every power series converges uniformly within its radius of convergence).  
Furthermore, while the accuracy of the implied volatility expansion \eqref{eq:sigma.expand} is limited by the number of terms one wishes to compute, we will show through a numerical example in Section \ref{sec:example} that very few terms are actually required to achieve an accurate approximation of implied volatility.
\end{remark}
\begin{remark}
As written, the expansion~\eqref{eq:sigma.expand} with $\sig_k$ given by \eqref{eq:sig.k} is not very convenient to compute.  Indeed, $\sig_k$ in \eqref{eq:sig.k} requires computing $u_k$ which, in the general case, requires a $(k+1)$-fold numerical integral.  Thus, the results of this section are primarily of theoretical interest.  In Section \ref{sec:example} however, we will show that, in a CEV-like setting, an approximation of the implied volatility can be computed in closed form.
\end{remark}

%
%

\section{Example: CEV-like L\'evy-type process}
\label{sec:example}
The constant elasticity of variance (CEV) model of \citet*{CoxCEV} improves upon the Black-Scholes model by allowing the volatility to depend on the present level of the underlying through a local volatility function of the form $\sig(y) = a e^{\beta y}$ (recall, $y=\log x$).  This model has enjoyed wide success because (i) it admits closed-form solutions for European option prices and (ii) when $\beta < 0$, the local volatility function \emph{increases} as $y \to - \infty$, which is consistent with the leverage effect and results in a negative implied volatility skew.  Still, the CEV model has some shortcomings.  First, the volatility function $\sig(y)$ drops to zero as $y$ tends to infinity.  Second, the model does not allow the underlying to experience jumps.
\par
We can retain some CEV-like features, while overcoming both of the above mentioned shortcomings by choosing $\eta(y) \equiv e_\beta(y):=e^{\beta y}$ in our framework.  In this setting, the volatility function, killing function, and L\'evy measure become 
\begin{align}
\sig(y)
        &=      (a_0 + \eps a_1^2 e^{\beta y})^{1/2} , &
k(y)
        &=      c_0 + \eps c_1^2 e^{\beta y} , &
\nu(y,dz)
        &=      \nu_0(y,dz) + \eps e^{\beta y} \nu_1(y,dz) .
\end{align}
To maintain consistency with the leverage effect, and to simplify the discussion, we shall assume that $\beta \leq 0$.
\begin{remark}
\label{rmk:modify}
Note that, since $e_\beta \notin \Sc$, the CEV-like model described above does not belong to the class of models described in Section \ref{sec:assumptions}.  
Nevertheless, one can always fix some $\underline{y}<Y_0$ and modify the function~$\eta$ 
so that $\eta\equiv e_\beta$ on the open interval $(\underline{y}, \infty)$ and so that it decays smoothly to zero 
on~$(-\infty, \underline{y}]$.
In this case, one should verify that the perturbing parameter $\eps$ is small enough to satisfy~\eqref{eq:eps.bound} with the modified function~$\eta$.
Throughout this section we will continue to perform computations with $\eta \equiv e_\beta$.  
We will check the validity of this simplification by testing our results by Monte Carlo simulation.
One could in principle make this adjustment more precise:
define $\tau_\eps:=\inf\{t\geq 0: Y_t<-\eps^{-1}\}$.
Then, if for any $t>0$, the quantity $\log\mathbb{P}(\tau_\eps<t)$ decays at least as fast as $-\eps^{-1}$, 
then we can modify the coefficients of the process such that the new process has similar tails (on an exponentially decreasing scale).
Such an argument can be found for instance in (\citet*{DFJV1}, Remark 2.11)
\end{remark}
\begin{remark}
\label{rmk:zeta}
When $\eta \equiv e_\beta$, the process $Y$ may reach $-\infty$ in finite time (equivalently, 
the origin is an attainable boundary for~$X$).
To account for this, we modify the default time $\zeta$ to be $\zeta := \zeta_0 \wedge \zeta_1$, where
$\zeta_{0} := \inf \{ t \geq 0 : Y_t = - \infty \}$
and
$\zeta_{1} := \inf \{ t \geq 0 : \int_0^t k(Y_s) ds \geq \Ec \}$.
This construction (see for example Section 1.1 in~\citet*{linetskybook}), corresponds to specifying $-\infty$ (resp. $0$)
as an absorbing boundary for $Y$ (resp. $X$).
\end{remark}
The CEV-like model enjoys the follow features:
\begin{itemize}
\item The local volatility function $\sig(y) \equiv \( a_0^2 + \eps a_1^2 e^{\beta y} \)^{1/2}$ behaves asymptotically 
like the CEV model $\sig(y) \sim \sqrt{\eps} a_1 e^{\beta y/2}$ as $y$ decreases to $-\infty$, reflecting the fact that volatility tends to increase as the asset price drops (the leverage effect).  
However, $\lim_{y\nearrow +\infty}\sigma(y)=a_0$, which is in \emph{contrast} to the CEV model, 
in which the local volatility function drops close to zero as $y$ tends to infinity.
\item Jumps of size $dz$ arrive with a state-dependent intensity of $\nu(y,dz) \equiv \nu_0(dz) + \eps e^{\beta y} \nu_1(dz)$.  
The local L\'evy measure behaves like $\nu(y,dz) \sim \eps e^{\beta y} \nu_1(dz)$ as $y \searrow -\infty$ 
and asymptotically like $\nu(y,dz) \sim \nu_0(dz)$ as $y \nearrow +\infty$.  
Thus, both the jump intensity and jump distribution can change drastically depending on the value of $y$ 
and the choice of L\'evy measures $\nu_0(dz)$ and $\nu_1(dz)$.
\item A default (i.e., of jump to zero of the asset price $X$) arrives with a state-dependent 
intensity~$k(y) \equiv \(c_0 + \eps c_1 e^{\beta y} \)$.  
The local killing function $k$ behaves asymptotically like $\eps c_1 e^{\beta y}$ as $y \searrow -\infty$, reflecting the fact that a default is more likely to occur as the asset price drops.
However, $\lim_{y\nearrow+\infty} k(y) = c_0$, which is a form first suggested by \citet*{JDCEV}.
\end{itemize}
To value an option, we must find an expression for $u_n$, given by \eqref{eq:u.n}, when $\eta = e_\beta$.  For any complex $\lam \in \Cb$ and analytic function $f$, \citet*{diracdelta} shows that
$\frac{1}{2\pi}\int_\Rb e^{i\lam x} dx =\del(\lam)$ 
and 
$\int_\Rb \del(\lam - \mu) f(\mu) d\mu = \lam$.
Thus, with $\psi_\lam$ given by \eqref{eq:psi}, we have
\begin{align}
\<\psi_\mu , e_\beta \psi_\lam \>
        &=              \del(\lam-\mu-i\beta) . \label{eq:delta}
\end{align}
Inserting \eqref{eq:delta} into \eqref{eq:u.n}, we see that the $(n+1)$-fold integral collapses into a single integral
\begin{align}
u_n
        &=              \int_\Rb d\lam \( \sum_{k=0}^n \frac{e^{t \phi_{\lam-ik\beta}}}
                                {\prod_{j\neq k}^n (\phi_{\lam-ik\beta}-\phi_{\lam-ij\beta})}\)
                                \( \prod_{k=0}^{n-1} \chi_{\lam-ik\beta}\) \<\psi_\lam, h\> \psi_{\lam-in\beta} \\
        &=              e_{n\beta} \int_\Rb d\lam \( \sum_{k=0}^n \frac{e^{t \phi_{\lam-ik\beta}}}
                                {\prod_{j\neq k}^n (\phi_{\lam-ik\beta}-\phi_{\lam-ij\beta})}\)
                                \( \prod_{k=0}^{n-1} \chi_{\lam-ik\beta}\) \<\psi_\lam, h\> \psi_{\lam} .
                                \label{eq:u.n.2}
\end{align}
\begin{remark}
\label{rmk:numerics}
Although we have written the option price as an infinite series \eqref{eq:u.eps.expand}, 
from a practical standpoint, one may only compute $u^\eps \approx u^{(N)} := \sum_{n=0}^N \eps^n u_n$ 
for some finite $N$.  
For any such $N$ we may pass the sum $\sum_{n=0}^N$ through the integral appearing in \eqref{eq:u.n.2}.  
Thus, for the purposes of computation, the best way express the approximate option price is
\begin{align}
u^\eps
        &\approx                u^{(N)}
        =                       \int_\Rb d\lam \<\psi_\lam, h\> \psi_{\lam} \sum_{n=0}^N 
                                \eps^n e_{n\beta} \( \sum_{k=0}^n \frac{e^{t \phi_{\lam-ik\beta}}}
                                {\prod_{j\neq k}^n (\phi_{\lam-ik\beta}-\phi_{\lam-ij\beta})}\)
                                \( \prod_{k=0}^{n-1} \chi_{\lam-ik\beta}\) . \label{eq:price.CEV}
\end{align}
Note that, to obtain the approximate value of $u^\eps$, \emph{only a single integration is required}.  
This makes the pricing formula \eqref{eq:price.CEV} as efficient as other models in which option prices are expressed as a Fourier-type integral (e.g. exponential L\'evy processes, Heston model, etc.).
\end{remark}
\begin{remark}
\label{rmk:choice}
The choice $\eta\equiv e_\beta$ is convenient since the Fourier transform of an exponential yields a Dirac Delta function (see~\eqref{eq:delta}), 
which results in the $(n+1)$-fold integral for $u_n$ collapsing to a one-dimensional integral.  
However, $\eta=e_\beta$ is not the only convenient choice for which this occurs.  Observe that
\begin{align}
\left. \begin{aligned}
\<\psi_\lam,\sin(\om \cdot) \>
        &= i \sqrt{\frac{\pi }{2}} \Big( - \del(\lam - \om) + \del(\lam + \om) \Big), \\
\<\psi_\lam,\cos(\om \cdot) \>
        &= \sqrt{\frac{\pi }{2}} \Big( \del(\lam - \om) + \del(\lam + \om) \Big), \\
\<\psi_\lam,(\cdot)^n \>
        &= i^n \sqrt{2\pi} \del^{(n)}(\lam) , 
\end{aligned} \right\} \label{eq:y.n}
\end{align}
where $\del^{(n)}$ is the $n$th derivative of a Delta function.  
In particular, any smooth function $\eta$ can locally be approximated by a truncated power series $\eta(y) \approx \sum_{i=0}^n \frac{1}{n!} \d^n \eta(y_0) (y-y_0)^i$.  Similarly, any periodic function can be approximated by a truncated Fourier series $\eta \approx \sum_{i=0}^n \( a_i \sin( \om_i y) + b_i \cos( \om_i y) \)$.  
Thus, equation \eqref{eq:y.n} provides a way to include arbitrary local dependence.
\end{remark}

%

\subsection{Implied volatility asymptotics for CEV-like models}
\label{sec:impvol.2}
While the implied volatility expansion of Section \ref{sec:impvol} is of considerable theoretical interest, it is not  computationally efficient to use equations \eqref{eq:sigma.expand} and \eqref{eq:sig.k}.  Indeed, computing the value of each $u_i$ in \eqref{eq:sig.k} requires a Fourier integration, which must be done numerically.  However, as we will show, if we restrict our analysis to CEV-like models, the leading order terms for implied volatility can be computed approximately in terms of simple functions, which require no numerical integration.
\begin{assumption}
\label{ass:phi.chi}
To simplify the analysis below, we assume that $\nu_0 \equiv \nu_1 \equiv 0$ and $c_0=c_1=0$, 
(i.e., $Y$ is an It\^o diffusion without killing).  Under this assumption,
$\phi_\lam \equiv -\frac{a_0^2}{2}\(\lam^2 + i \lam \) $
and
$\chi_\lam \equiv -\frac{a_1^2}{2} \(\lam^2 + i \lam \) $.
We emphasise that the assumption on $\nu_1$ is for computational convenience only.  At the end of this section, in Remark \ref{rmk:levy}, we show how to relax this assumption.
\end{assumption}
\noindent
The key to the computations that follow will be to show that $u_1$ and $u_2$ can be approximated by a differential operator acting on $u_0=u^{\BS}(a_0)$.  To this end, using \eqref{eq:u.n.2} we observe that, for any $M\geq 1$, we have
\begin{align}
u_1(t,y)
        &=              e^{\beta y} \int_\Rb d\lambda
                                 \( \frac{e^{t\phi_\lam}}{\phi_\lam-\phi_{\lam-i\beta}} 
                                        + \frac{e^{t \phi_{\lam-i\beta } } }{\phi_{\lam-i\beta}-\phi_\lam } \)
                                \chi_\lam \<\psi_\lam, h\> \psi_{\lam}(y) \\
        &=              e^{\beta y} \int_\Rb d\lambda
                                 \( \frac{1}{\phi_\lam-\phi_{\lam-i\beta}} 
                                        + \frac{e^{t \phi_{\lam-i\beta } - t \phi_\lam} }{\phi_{\lam-i\beta}-\phi_\lam } \)
                                \chi_\lam  e^{t\phi_\lam} \<\psi_\lam, h\> \psi_{\lam}(y) \\
        &=              e^{\beta y} \int_\Rb d\lambda
                                 \( \sum_{n=1}^\infty \frac{t^n}{n!} \( \phi_{\lam-i\beta}-\phi_\lam \)^{n-1} \)
                                \chi_\lam  e^{t\phi_\lam} \<\psi_\lam, h\> \psi_{\lam}(y) \\
        &\approx
                                e^{\beta y} \int_\Rb d\lambda
                                 \( \sum_{n=1}^M \frac{t^n}{n!} \( \phi_{\lam-i\beta}-\phi_\lam \)^{n-1} \)
                                \chi_\lam  e^{t\phi_\lam} \<\psi_\lam, h\> \psi_{\lam}(y) \\
        &=              e^{\beta y} 
                                \sum_{n=1}^M \frac{t^n}{n!} \( \phi_{-i\d-i\beta}-\phi_{-i\d} \)^{n-1} \chi_{-i\d}
                                \int_\Rb d\lambda e^{t\phi_\lam} \<\psi_\lam, h\> \psi_{\lam}(y) \\
        &=              e^{\beta y} 
                                \sum_{n=1}^M \frac{t^n}{n!} \( \phi_{-i\d-i\beta}-\phi_{-i\d} \)^{n-1} \chi_{-i\d}u_0(t,y) 
        =:              u_1^{(M)}(t,y) , \label{eq:u1.M}
\end{align}
We used here the fact that $p(\lam)\psi_\lam=p(-i\d)\psi_\lam$ for any polynomial function $p$.
Similarly,  for $u_2$, we find
\begin{align}
u_2(t,y)
        &=              e^{2 \beta y} \int_\Rb d\lambda
                                \bigg(
                                        \frac{e^{t\phi_\lam}}
                                        { (\phi_\lam-\phi_{\lam-i\beta})( \phi_\lam-\phi_{\lam-2i\beta})}
                                        + \frac{e^{t \phi_{\lam-i\beta}}}
                                        {(\phi_{\lam-i\beta}-\phi_\lam)(\phi_{\lam-i\beta}-\phi_{\lam-2i\beta}) } \\ & \qquad
                                        + \frac{e^{t \phi_{\lam-2i\beta}}}
                                        {(\phi_{\lam-2i\beta}-\phi_\lam)(\phi_{\lam-2i\beta}-\phi_{\lam-i\beta}) }
                                \bigg)
                                \chi_{\lam-i\beta} \chi_\lam \<\psi_\lam, h\> \psi_{\lam}(y) \\
        &=              e^{2 \beta y} \int_\Rb d\lambda
                                \bigg(
                                        \frac{1}
                                        {(\phi_\lam-\phi_{\lam-i\beta})( \phi_\lam-\phi_{\lam-2i\beta})}
                                        + \frac{e^{t \phi_{\lam-i\beta} - t\phi_\lam }}
                                        {(\phi_{\lam-i\beta}-\phi_\lam)(\phi_{\lam-i\beta}-\phi_{\lam-2i\beta}) } \\ & \qquad
                                        + \frac{e^{t \phi_{\lam-2i\beta} - t\phi_\lam}}
                                        {(\phi_{\lam-2i\beta}-\phi_\lam)(\phi_{\lam-2i\beta}-\phi_{\lam-i\beta}) }
                                \bigg)
                                \chi_{\lam-i\beta} \chi_\lam e^{t\phi_\lam} \<\psi_\lam, h\> \psi_{\lam}(y) \\
        &=              e^{2 \beta y} \int_\Rb d\lambda
                                \bigg(
                                        \frac{1}{\phi_{\lam-i\beta}-\phi_{\lam-2i\beta} }
                                        \sum_{n=1}^\infty \frac{t^n}{n!} \( \phi_{\lam-i\beta}-\phi_\lam\)^{n-1} \\ & \qquad
                                        + \frac{1}{\phi_{\lam-2i\beta}-\phi_{\lam-i\beta} }
                                        \sum_{n=1}^\infty \frac{t^n}{n!} \( \phi_{\lam-2i\beta}-\phi_\lam\)^{n-1}
                                \bigg)
                                \chi_{\lam-i\beta} \chi_\lam e^{t\phi_\lam} \<\psi_\lam, h\> \psi_{\lam}(y) \\
        &=              e^{2 \beta y} \int_\Rb d\lambda
                                \bigg(
                                        \sum_{n=2}^\infty \frac{t^n}{n!} \sum_{k=1}^{n-1} \binom{n-1}{k}
                                        \frac{ \( \phi_{\lam-i\beta} \)^k - \( \phi_{\lam-2i\beta} \)^k }
                                        {\phi_{\lam-i\beta}-\phi_{\lam-2i\beta} }
                                        \( \phi_\lam \)^{n-1-k} 
                                \bigg) 
                                \chi_{\lam-i\beta} \chi_\lam e^{t\phi_\lam} \<\psi_\lam, h\> \psi_{\lam}(y) \\
        &=              e^{2 \beta y} \int_\Rb d\lambda
                                \bigg(
                                        \sum_{n=2}^\infty \frac{t^n}{n!} \sum_{k=1}^{n-1} \binom{n-1}{k}
                                        \( - \phi_\lam \)^{n-1-k} 
                                        \sum_{m=0}^{k-1} \( \phi_{\lam-i\beta} \)^m \( \phi_{\lam-2i\beta} \)^{k-1-m}
                                \bigg) \\ & \qquad
                                \chi_{\lam-i\beta} \chi_\lam e^{t\phi_\lam} \<\psi_\lam, h\> \psi_{\lam}(y) \\
        &\approx
                                e^{2 \beta y} 
                                \sum_{n=2}^M \frac{t^n}{n!} \sum_{k=1}^{n-1} \binom{n-1}{k}
                                \( - \phi_{-i\d} \)^{n-1-k} 
                                \sum_{m=0}^{k-1} \( \phi_{-i\d-i\beta} \)^m \( \phi_{-i\d-2i\beta} \)^{k-1-m}
                                \chi_{-i\d-i\beta} \chi_{-i\d} u_0(t,y) \\              
        &=: u_2^{(M)}(t,y) , \label{eq:u2.M}
\end{align}
where we use $b^k-c^k = (b-c)\sum_{n=0}^{k-1}b^n c^{k-1-n}$.  
Define $\sig_1^{(M)}$ and $\sig_2^{(M)}$ as the $M$th order approximation of $\sig_1$ and $\sig_2$
(obtained by replacing $u_1$ and $u_2$ in ~\eqref{eq:sig.1} and~\eqref{eq:sig.2} by $u_1^{(M)}$ and $u_2^{(M)}$):
\begin{equation}\label{eq:sig.2.M}
\left.
\begin{array}{rll}
\Oc(\eps): & 
\sig_1^{(M)}  :=      \displaystyle \frac{u_1^{(M)}}{\d_\sig u_0}, \\
\Oc(\eps^2): &
\sig_2^{(M)} :=      \displaystyle \frac{u_2^{(M)} - \tfrac{1}{2!} (\sig_1^{(M)})^2 \d_\sig^2 u_0}{\d_\sig u_0}, 
\end{array}
\right.
\end{equation}
Since $\chi_{-i\d}\equiv\frac{1}{2}a_1^2(\d^2-\d)$, the functions $u_1^{(M)}$ and $u_2^{(M)}$ are of the form
\begin{equation}
u_i^{(M)}
        =      \sum_{n = 0}^{M} b_{i,n} \d^n (\d^2 - \d) u_0, \label{eq:u.i.M}
\end{equation}
for $i \in \{1,2\}$, where $(b_{i,n})$ are coefficients which can be computed 
by expanding the terms in~\eqref{eq:u1.M} and~\eqref{eq:u2.M}.  
Next, using the Black-Scholes formula for European call options we compute
(recall that $u_0\equiv u^{\BS}$)
\begin{align}
\left.\d_\sig u_0\right|_{\sigma=a_0}
        &= t a_0 ( \d^2 - \d ) u_0 &
(\d^2 - \d ) u_0
        &=      \frac{1}{a_0 \sqrt{t}} \exp \( y - \frac{d_+^2}{2} \) , &
d_+
        &=      \frac{1}{a_0 \sqrt{t}} \( y - k + \frac{a_0^2 t}{2}\) \label{eq:BS}
\end{align}
Inserting~\eqref{eq:u.i.M} and~\eqref{eq:BS} into~\eqref{eq:sig.2.M}, we obtain
\begin{align}
\Oc(\eps):&&
\sig_1^{(M)}
        &=      \sum_{n\geq 0}^{M} \( \frac{b_{1,n} \d^n  \exp\( y - \frac{d_+^2}{2} \)}
                        {t a_0 \exp\( y - \frac{d_+^2}{2} \)} \) , \\
\Oc(\eps^2):&&
\sig_2^{(M)}
        &=      \sum_{n\geq 0}^{M} \( \frac{ b_{2,n} \d^n  \exp\( y - \frac{d_+^2}{2} \) }
                        {t a_0 \exp\( y - \frac{d_+^2}{2} \)} \) -
                        \frac{1}{2}\( \sig_1^{(M)} \)^2 \( \frac{(k-y)^2}{t a_0^3}-\frac{t a_0 }{4} \) .
\end{align}
The above expressions, while perhaps involving many terms, can be easily computed using a computer algebra system such as Mathematica.  Once computed explicitly, the above expressions are simple functions of $(t,y,k)$, which require no integration.  Thus, the approximate implied volatility
\begin{align}
\sig^{(2,M)}
        &:=     \sig_0 + \eps \sig_1^{(M)} + \eps^2 \sig_2^{(M)} ,  \label{eq:sig.2.M.b}
\end{align}
can be computed extremely quickly.  We provide Mathematica code for computing $\sig^{(2,M)}$ in Appendix \ref{sec:mathematica}.
\begin{remark}
\label{rmk:levy}
The results of this section can be further extended by relaxing the assumption on $\chi_\lam$.  Consider the case where the L\'evy measure $\nu_1$ is non-zero.  Then $\chi_\lam$ is of the form
\begin{align}
\chi_\lam 
        &=      \frac{1}{2}a_0^2 \( -\lam^2 - i \lam \) + \int_\Rb \nu_1(dz) (e^{i\lam z} - 1 - i \lam z)
                        -  i \lam \int_\Rb \nu_1(dz) (e^{z} - 1 - z) \\
        &=      \frac{1}{2}a_0^2 \( (i \lam)^2 - i \lam \) + \sum_{n=2}^\infty I_n ( (i \lam)^n - i \lam ) , 
        \qquad
\text{where} \qquad I_n
        :=      \int_\Rb \nu_1(dz) z^n .
\end{align}
In this case, we can approximate the operator $\chi_{-i\d}$ by truncating the infinite sum at some finite $q\in\mathbb{N}$:
\begin{align}
\chi_{-i\d}^{(q)}
        &:= \frac{1}{2}a_0^2 \( \d^2 - \d \) + \sum_{n=2}^q I_n ( \d^n - \d ) \\
        &=      \frac{1}{2}a_0^2 \( \d^2 - \d \) + \sum_{n=2}^q I_n \sum_{k=2}^n ( \d^k - \d^{k-1} ) \\
        &=      \frac{1}{2}a_0^2 \( \d^2 - \d \) + \sum_{n=2}^q I_n \sum_{k=2}^n \d^{k-2}( \d^2 - \d ) ,
\end{align}
This truncation implies that $u_i^{(M)}$ remains of the form \eqref{eq:u.i.M}, 
which allows for explicit computation of~$\sig^{(2,M)}$.
\end{remark}


\subsection{Numerical Results}
Because $\eta = e_\beta$ does not satisfy the requirement $\eta \in \Sc$, it is important to test the validity of the pricing formula \eqref{eq:price.CEV}.  Below, we provide numerical tests to support this formula.  First, we examine convergence of the FK density.  Next, we compare the implied volatility surface induced by option pricing approximation \eqref{eq:price.CEV} to the implied volatility surface generated by a Monte Carlo simulation.  Then, we examine the implied volatility expansion of Section \ref{sec:impvol}.  We also illustrate the empirical relevance of this model by calibrating a particular CEV-like model with Gaussian jumps to the implied volatility surface of S{\&}P500 index options.  Finally, we examine the implied volatility approximation of Section \ref{sec:impvol.2}.

\subsubsection{Convergence of the approximate FK density}
In order to examine convergence of the FK density $p^\eps(t,y,z)$, we define the $\Oc(\eps^N)$ approximation of the FK density $p^{(N)}(t,y,z)$, given by setting $h = \del_z$ in \eqref{eq:price.CEV}
\begin{align}
p^{(N)}(t,y,z)
        =                       \int_\Rb d\lam \<\psi_\lam, \del_z \> \psi_{\lam}(y) \sum_{n=0}^N 
                                \eps^n e_{n\beta} \( \sum_{k=0}^n \frac{e^{t \phi_{\lam-ik\beta}}}
                                {\prod_{j\neq k}^n (\phi_{\lam-ik\beta}-\phi_{\lam-ij\beta})}\)
                                \( \prod_{k=0}^{n-1} \chi_{\lam-ik\beta}\) . \label{eq:price.CEV.p}
\end{align}
In Figure \ref{fig:density2} we plot the approximate transition density $p^{(N)}$ for a CEV-like model with Gaussian jumps
\begin{align}
\nu_i(dz)
        &=              \frac{1}{\sqrt{ 2 \pi s_i^2}} \exp \( \frac{-(z-m_i)^2}{2 s_i^2} \) dz . \label{eq:gaussian}
\end{align}
For the smallest initial value in the plot, $y = -0.6$ we see that $p^{(8)}$ and $p^{(9)}$ are virtually identical.  As the initial value $y$ moves in the positive direction, fewer terms are required for convergence.  For $y=0.0$, we see very little difference between  $p^{(4)}$ and $p^{(5)}$.  And for $y=0.6$, we see that $p^{(2)}$ and $p^{(3)}$ are nearly identical.  This is not surprising, since the size of the perturbing term $\eps e_\beta \Ac_1$ decreases as $y$ tends to infinity.

\subsubsection{Comparison to Monte Carlo simulation}
\label{sec:monte.carlo}
In order to test the accuracy of pricing formula \eqref{eq:price.CEV} we compute the price of a series of call options with $N=10$.  We once again assume Gaussian jumps, as in \eqref{eq:gaussian}.  For each call option, we also compute its price using Monte Carlo simulation.  For the Monte Carlo simulations we use a standard Euler scheme with a time step of $10^{-3}$ years and run $10^7$ sample paths.  As implied volatility -- rather than price -- is the more relevant quantity for call options, we convert prices to implied volatilities by inverting the Black-Scholes formula numerically (we examine our implied volatility expansion in the next section).  In Figure \ref{fig:impvol} we plot the resulting implied volatilities as a function of the $\log$-moneyness to maturity ratio, $\text{LMMR}:=(k-y)/t$.  For the strikes and maturities tested, we see very close agreement between the implied volatilities resulting from pricing approximation \eqref{eq:price.CEV} and the implied volatilities resulting from the Monte Carlo simulation.

\subsubsection{Implied Volatility Expansion}
In section we examine the implied volatility expansion of Section \ref{sec:impvol}.  We continue to work in the CEV-like setting with $\eta = e_\beta$.  But, we now set $\nu_0 \equiv 0$ and $c_0=0$, which is an assumption of Section \ref{sec:impvol}.  We still assume $\nu_1$ is Gaussian, as in equation \eqref{eq:gaussian}.  We define the $\Oc(\eps^n)$ approximation of the implied volatility
\begin{align}
\sig^{(n)}
        &:=             \sum_{k=0}^n \eps^k \sig_k , \label{eq:sigma.approximate}
\end{align}
where $\sig_0=a_0$ and the $\{\sig_k\}_{n=1}^\infty$ are given by \eqref{eq:sig.k}.  The values of $u_n$, which are needed for the implied volatility expansion, are computed using \eqref{eq:u.n.2}.  In figure \ref{fig:impvol2} we plot $\sig^{(n)}$ for $n=0,1,\cdots,5$.  In order to see how well the truncated implied volatility expansion approximates the exact implied volatility $\sig^\eps$ we also plot a proxy of $\sig^\eps$.  Our proxy for $\sig^\eps$ is obtained by approximating $u^\eps$ with $u^{(12)}$, and then  by inverting the Black-Scholes formula numerically to obtain $\sig^\eps$.  The price $u^{(12)}$ is computed using \eqref{eq:price.CEV}.  Given the numerical results of Section \ref{sec:monte.carlo}, approximating $u^\eps$ with $u^{(12)}$ should not introduce much error.
\par
In Figure~\ref{fig:impvol2} we see very fast convergence of $\sig^{(n)}$ to $\sig^\eps$ for $\text{LMMR} \in [-0.5,3.0]$.  In this region $\sig^{(3)}$ is nearly indistinguishable from $\sig^\eps$.  Outside of this region, however, the implied volatility expansion does not converge.  This is due to the fact that, for $\text{LMMR} \notin [-0.5,3.0]$ we have $|u^\eps-u^{\BS}(a_0)| > R$, where $R=R(t,y,k,a_0)$ is the radius of convergence of the infinite series \eqref{eq:inverse} with $\sig_0 = a_0$.

\subsubsection{Calibration to S{\&}P500 options}
\label{sec:sp500}
In order to demonstrate the applicability of the CEV-like models from Section~\ref{sec:example} we perform a sample calibration to S{\&}P500 options.
For the calibration, we assume that jumps are Gaussian, i.e. that 
$$
\nu_i(dz) = \frac{\Gamma_i}{\sqrt{2\pi s^2}}\exp\( \frac{(y-m)^2}{2 s^2} \) dz,
$$
for $i=1,2$.
We have assumed here a common mean $m$ and variance $s^2$, but have allowed for separate jump intensities $\Gamma_0, \Gamma_1>0$.
Thus the jump distribution remains constant, but the intensity $\Gamma_0 + \eps e^{\beta y}\Gamma_1$ 
varies with~$y$.  
One could allow for additional flexibility by considering separate means and variances.  
\par
Let $\Phi$ be the set of model parameters and let $\Theta$ be the feasible set for these parameters.  We denote by $\text{IV}(t,k;\Phi)$ the implied volatility of an option with time to maturity $t$ and $\log$-strike $k$, as computed using~$\Phi$, and we denote by $\text{IV}^{obs}(t,k)$, the observed implied volatility of an option with time-to-maturity $t$ and $\log$-strike $k$.  We formulate the calibration problem as a least-squares fit to the observed implied volatility.  That is, we seek $\Phi^{*}$ such that  
\begin{align}
\inf_{\Phi \in \Theta} \sum_{{(t,k)\in (\mathcal{T},\mathcal{K})}} \( \text{IV}^{obs}(t,k) - \text{IV}(t,k;\Phi) \)^2 
        &=       \sum_{{(t,k)\in (\mathcal{T},\mathcal{K})}} \( \text{IV}^{obs}(t,k) - \text{IV}(t,k;\Phi^{*}) \)^2,
\end{align}
where $(\mathcal{T},\mathcal{K})$ represents the set of all (maturity, strike) observed implied volatility data.
Observe that we fit all maturities in the data set simultaneously; we do \emph{not} fit maturity-by-maturity.  Note, because $\nu_0 \neq 0$ and $c_0 \neq 0$, we are not in the setting of Section \ref{sec:impvol}.  Thus we must compute implied volatilities by first computing option prices using \eqref{eq:price.CEV}, and then by inverting the Black-Scholes formula numerically.  The results of the calibration procedure are plotted in Figure \ref{fig:SP500}.  The figure clearly shows that the CEV-like model considered in this section provides a tight fit to implied volatility across maturities.
\par 
Using the parameters obtained in the calibration procedure, we run a series of numerical tests in order to investigate the computational cost of computing $\text{IV}^{(N)}$ (the implied volatility induced by $u^{(N)}$) for different values of $N$.  As a point of comparison, we note that $u^{(0)}$ corresponds to the price of an option as computed in an exponential L\'evy setting (i.e., an exponential L\'evy model with no local-dependence).  As demonstrated in Table \ref{tab:IV},  for $N=3$ we obtain we obtain implied volatilities that are accurate to two decimal places.  These implied volatilities require roughly $2.22$ times as long to compute as the corresponding implied volatilities in an exponential L\'evy setting.

\subsubsection{Implied volatility asymptotics for CEV-like models with no jumps}
\label{sec:impvol.3}
In our last numerical experiment, we implement the implied volatility expansion outlined 
in Section~\ref{sec:impvol.2}.
Under Assumption \eqref{ass:phi.chi} we compute approximate implied volatilities $\sig^{(2,M)}$ 
using~\eqref{eq:sig.2.M} and~\eqref{eq:sig.2.M.b}.
For comparison, we also plot the exact implied volatility $\sig^\eps$.
To compute $\sig^\eps$, we first compute $u^\eps$ using~\eqref{eq:price.CEV} and then 
we invert the Black-Scholes formula numerically.  
The results are plotted in Figure~\ref{fig:IV}.  
With a time-to-maturity of $t=1/2$, 
we observe a nearly exact match between $\sig^{(2,M)}$ and $\sig^\eps$ for $\log$-moneyness $k-y>-0.5$.

%
%

\section{Conclusion}
\label{sec:conclusion}
In this paper we introduce a class of L\'evy-type models in which the diffusion coefficient, the L\'evy measure and the default intensity all depend locally on the value of the underlying.  Within this framework, we obtain a formula (written as an infinite series) for the price of a European option.  Furthermore, we provide conditions under which this infinite series is guaranteed to converge.  Additionally, we obtain an explicit expression for the implied volatility smile induced by a certain sub-class of L\'evy-type models.  This series is exact within its radius of convergence.  As an example of our framework, we introduce a CEV-like L\'evy-type model, which corrects some short-comings of the CEV model; namely (i) our choice of local volatility function does not drop to zero as the value of the underlying increases and (ii) our model permits the underlying asset to experience jumps.  In this CEV-like setting, we show that option prices can be computed with the same level of efficiency as other models in which option prices are computed as Fourier-type integrals and we show that approximate implied volatilities can be computed explicitly without integration.  We also test the accuracy of the pricing and implied volatility formulas in the CEV-like setting through numerical examples.  And we show that one specific CEV-like model with normal jumps provides a tight fit to the observed S{\&}P500 implied volatility surface.

\subsection*{Thanks}
The authors would like to thank Bjorn Birnir, Stephan Sturm and two anonymous referees for their helpful comments.

%
%

\bibliographystyle{chicago}
\bibliography{BibTeX-Master}    

%
%

\clearpage
\appendix

%
%

\section{Proof of Proposition \ref{thm:un}}
\label{sec:un}
We begin the proof by Fourier transforming the left-hand side of \eqref{eq:u0.PIDE} and \eqref{eq:un.PIDE}.  We have
\begin{align}
\< \psi_\lam, (-\d_t + \Ac_0) u_n \>
        &=      -\d_t \< \psi_\lam, u_n \> + \< \psi_\lam, \Ac_0 u_n \>
        =               -\d_t \< \psi_\lam, u_n \> + \< \Ad_0 \psi_\lam, u_n \>
        =               ( -\d_t + \phi_\lam ) \< \psi_\lam , u_n \> ,   
\end{align}
where we use $\Ad_0 \overline{\psi_\lam} = \phi_\lam \overline{\psi_\lam}$.  
Fourier transforming the right-hand side of \eqref{eq:un.PIDE} and the initial conditions yields the following ODEs in the variable $t$ for $\<\psi_\lam,u_0\>$ and for the sequence $(\<\psi_\lam,u_n\>)_{n \geq 1}$:
\begin{align}
\Oc(1):&&
( -\d_t + \phi_\lam ) \< \psi_\lam , u_0 \>
        &= 0, &
\< \psi_\lam , u_0(0,\cdot) \>
        &=      \< \psi_\lam , h \> , \\
\Oc(\eps^n):&&
( -\d_t + \phi_\lam ) \< \psi_\lam , u_n \>
        &=      - \< \psi_\lam , \eta \Ac_1 u_{n-1} \> , &
\< \psi_\lam , u_n(0,\cdot) \>
        &=      0,\qquad n\geq 1 .
\end{align}
The following solutions can easily be checked (e.g., by substitution)
\begin{align}
\Oc(1):&&
\< \psi_\lam , u_0(t,\cdot) \>
        &=      e^{t \phi_\lam} \< \psi_\lam , h \> , \\
\Oc(\eps^n):&&
\< \psi_\lam , u_n(t,\cdot) \>
        &=      \int_0^t ds e^{(t-s) \phi_\lam}  \< \psi_\lam , \eta \Ac_1 u_{n-1}(s,\cdot) \>,\qquad n\geq 1  .
\end{align}
Next, using~\eqref{eq:identity}, we obtain
\begin{align}
\Oc(1):&&
u_0(t,y)
        &=      \int_\Rb d\lam e^{t \phi_\lam} \< \psi_\lam , h \> \psi_\lam(y) , \\
\Oc(\eps^n):&&
u_n(t,y)
        &=      \int_\Rb d\lam \int_0^t ds e^{(t-s) \phi_\lam} \< \psi_\lam , \eta \Ac_1 u_{n-1}(s,\cdot) \> \psi_\lam(y),\qquad n\geq 1 .
\end{align}
Note that the sequence $(u_n)_{n\geq 1}$ can be computed recursively.
For example,
\begin{align}
u_1(t,y)
        &=      \int_\Rb d\lam \int_0^t ds e^{(t-s) \phi_\lam} \< \psi_\lam , \eta \Ac_1 u_{0}(s,\cdot) \> \psi_\lam(y) \\
        &=      \int_\Rb \int_\Rb d\lam d\mu \int_0^t ds e^{(t-s) \phi_\lam}
                        \< \psi_\lam , \eta \Ac_1 e^{s \phi_\mu} \< \psi_\mu , h \> \psi_\mu \> \psi_\lam(y) \\
        &=      \int_\Rb \int_\Rb d\lam d\mu \int_0^t ds e^{t\phi_\lam + s(\phi_\mu-\phi_\lam)} \chi_\mu
                        \< \psi_\lam , \eta \psi_\mu \> \< \psi_\mu , h \>\psi_\lam(y) \\
        &=      \int_\Rb \int_\Rb d\lam d\mu \( \frac{e^{t\phi_\mu} - e^{t\phi_\lam}}{\phi_\mu - \phi_\lam} \) \chi_\mu
                        \< \psi_\lam , \eta \psi_\mu \> \< \psi_\mu , h \>\psi_\lam(y)
\end{align}
Generalizing the above recursion relation to arbitrary $n$, one finds expression \eqref{eq:u.n} for $u_n$.

%
%

\section{Proof of Theorem \ref{thm.u.eps}}
\label{sec:proof2}
In this section, we will show that $u^\eps$, given by~\eqref{eq:u.eps.expand} and~\eqref{eq:u.n}, is a classical solution of the Cauchy problem~\eqref{eq:u.eps.PIDE} under the conditions of Theorem \ref{thm.u.eps}.  Throughout this section, we define a Hilbert space $\Hc=L^2(\Rb)$ with norm $\norm{\cdot} = \<\cdot,\cdot\>$ given by~\eqref{eq:L2.norm}.  
Our strategy is to show that the closure of~$\Ac^\eps=\Ac_0 + \eps \eta \Ac_1$ 
(with a domain restricted to $\Hc$) generates a~$C_0$-contraction semigroup~$\Pc^\eps=\{\Pc_t^\eps, t \geq 0 \}$ in~$\Hc$.  
The semigroup~$\Pc^\eps$ has the property that if $h \in \dom{\Ac^\eps}$ then $\Pc_t^\eps h \in \dom{\Ac^\eps}$ (\cite{engel2006short}, Proposition II.6.2) and
$\left(-\d_t + \Ac^\eps \right)\Pc_t^\eps h =0$ with initial condition~$\Pc_0^\eps h = h$.
Thus, if we can show that $\Ac^\eps$ generates a semigroup~$\Pc^\eps$, we can identify $u^\eps(t,y) \equiv \Pc^\eps h(t,y)$ as the unique classical solution to~\eqref{eq:u.eps.PIDE}.  
Moreover, if it exists, the semigroup~$\Pc_t^\eps$ is given by
$\Pc_t^\eps = \exp ( t \Ac^\eps ) = \exp ( t [\Ac_0 + \eps \eta \Ac_1])$,
where the exponential is defined by
\begin{align}
\exp( t \Ac^\eps )
        &:=             \lim_{n \to \infty} \( 1 - \frac{t}{n} \Ac^\eps \)^{-n} ,
\end{align}
and the solution $u^\eps(t,y) = \Pc^\eps h(t,y)$ \emph{inherits} the analyticity of the exponential in the perturbing parameter~$\eps$.  
Thus, if $\Ac^\eps$ generates a semigroup $\Pc^\eps$, then~$u^\eps$ is an analytic function of~$\eps$, 
and has the representation~\eqref{eq:u.eps.expand}.
\par
We start by defining the domains of the operators $\Ac_0$, $\Ac_1$ and $\eta \Ac_1$:
$\dom{\Ac_i} := \{ u \in \Hc : \norm{\Ac_i u } < \infty \}$ for $i=0,1$
and $\dom{\eta \Ac_1} := \{ u \in \Hc : \norm{\eta \Ac_1 u } < \infty \}$.
Note that 
\begin{align}
\left\| \Ac_0 u \right\|^2
        &=              \int_\Rb d\lam |\<\psi_\lam,u\>|^2 |\phi_\lam|^2 , &
\left\| \Ac_1 u \right\|^2
        &=              \int_\Rb d\lam |\<\psi_\lam,u\>|^2 |\chi_\lam|^2 , &
\left\| \eta \right\|^2
        &=              \int_\Rb d\lam |\<\psi_\lam,\eta\>|^2 .
\end{align}
Thus by~\eqref{eq:implies}, we have $\dom{\Ac_0} \subseteq \dom{\Ac_1}$.  
Since $\eta \in \Sc$, then $\norm{\eta \Ac_1 u}^2 \leq \norm{\eta}^2 \cdot \norm{ \Ac_1 u}^2$
is finite for any $u \in \dom{\Ac_1}$.  
Therefore the inclusions~$\Sc \subseteq \dom{\Ac_0} \subseteq \dom{\Ac_1} \subseteq \dom{\eta \Ac_1}$ hold.
Therefore since $\Sc$ is a dense subset of $\Hc$ (see~\citet*{jacob2001pseudo}, Corollary 2.6.1), 
the operators 
$\Ac_0$, $\Ac_1$ and $\eta \Ac_1$ are densely defined in~$\Hc$.  
To show that $\Ac^\eps$ generates a semigroup $\Pc^\eps$ we recall the following theorem from \citet*{chernoff}:
\begin{theorem}
\label{thm:semigroup}
Let $\Ac$ be the generator of a $C_0$-contraction semigroup $\Pc_t^0\equiv \exp(t \Ac)$ on a Banach space,
and~$\eps \Bc$ a dissipative operator with a densely defined adjoint.
If there exist two real constants $A \geq 0$ and $B \leq 1$ such that the inequality
$\norm{\eps \Bc u} \leq A \norm{u} + B \norm{ \Ac u }$ 
holds for all $u\in {\rm dom}(\Ac)$ (i.e., the operator $\eps \Bc$ is bounded relative to~$\Ac$ with relative bound~$B$), 
then the closure of $\Ac^\eps:=\Ac+\eps\Bc$ 
generates a $C_0$-contraction semigroup~$\Pc_t^\eps=\exp(t \Ac^\eps)$.
\end{theorem}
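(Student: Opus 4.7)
The plan is to apply the Lumer--Phillips characterization of generators of $C_0$-contraction semigroups: a densely defined dissipative operator $T$ generates such a semigroup if and only if, for some $\lambda > 0$, the range of $\lambda I - T$ equals the whole space (with $T$ automatically closed in that case). I would verify these conditions for the closure of $\Ac^\eps := \Ac + \eps \Bc$, defined initially on $\dom{\Ac}$. Density of this domain is immediate, since the relative bound $\norm{\eps \Bc u} \leq A \norm{u} + B \norm{\Ac u}$ forces $\dom{\Ac} \subseteq \dom{\eps\Bc}$, and $\dom{\Ac}$ is dense by the hypothesis that $\Ac$ generates a $C_0$-semigroup. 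Dissipativity of $\Ac^\eps$ on $\dom{\Ac}$ follows directly from the dissipativity of $\Ac$ (as a contraction generator) and of $\eps \Bc$ (by hypothesis). The assumption that $\eps \Bc$ has a densely defined adjoint ensures that $\Ac^\eps$ is closable, so passage to the closure is meaningful.

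The crux is the range condition, which I would establish by exploiting the factorization
\begin{equation*}
\lambda I - \Ac^\eps = (\lambda I - \Ac)(I - \eps \Bc R_\lambda),
\qquad R_\lambda := (\lambda I - \Ac)^{-1}.
\end{equation*}
Since $\Ac$ is a dissipative generator, one has the standard estimates $\norm{R_\lambda} \leq 1/\lambda$ and $\norm{\Ac R_\lambda} \leq 1$ for every $\lambda > 0$. Applying the relative bound to $u = R_\lambda v$ then yields
\begin{equation*}
\norm{\eps \Bc R_\lambda v} \leq A \norm{R_\lambda v} + B \norm{\Ac R_\lambda v} \leq \left( \frac{A}{\lambda} + B \right) \norm{v}.
\end{equation*}
When $B < 1$, choosing $\lambda$ large enough makes the prefactor strictly below one, so $I - \eps \Bc R_\lambda$ is invertible via the Neumann series, and $\lambda I - \Ac^\eps$ is surjective. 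Combined with density and dissipativity, Lumer--Phillips then yields that $\Ac^\eps$ (which is already closed in this subcritical regime, by the dissipative analogue of the Kato--Rellich theorem) generates the desired contraction semigroup.

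The main obstacle is the borderline case $B = 1$, where the prefactor $A/\lambda + B$ cannot be driven strictly below one and the Neumann series argument degenerates. The standard remedy is a continuation argument: introduce $\Ac_s := \Ac + s \eps \Bc$ for $s \in [0, 1]$ and show that the set
\begin{equation*}
S := \{ s \in [0, 1] : \overline{\Ac_s} \text{ generates a } C_0\text{-contraction semigroup} \}
\end{equation*}
is both open and closed in $[0, 1]$. Openness follows by applying the subcritical argument locally around any $s_0 \in S$, treating $(s - s_0) \eps \Bc$ as a small relative perturbation of $\overline{\Ac_{s_0}}$ whose relative bound can be made arbitrarily small; closedness follows from a graph-limit argument using the uniform dissipativity and contraction estimates on the approximating resolvents. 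Since $0 \in S$ trivially, connectedness of $[0, 1]$ forces $S = [0, 1]$, and in particular $1 \in S$, which is the desired conclusion.
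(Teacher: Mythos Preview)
The paper does not actually prove this theorem: it is quoted verbatim as a known result from \citet*{chernoff}, with no argument supplied beyond the citation. So there is no ``paper's own proof'' to compare against; your sketch is in fact a reconstruction of the standard proof of Chernoff's perturbation theorem, and it is essentially correct.

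A couple of remarks on points where you were slightly quick. First, the assertion that dissipativity of $\Ac^\eps$ ``follows directly'' from that of $\Ac$ and $\eps\Bc$ is not automatic in a general Banach space: dissipativity of an operator $T$ only guarantees the existence of \emph{some} $u^*\in J(u)$ with $\Re\langle u^*,Tu\rangle\le 0$, and the witnesses for $\Ac$ and $\eps\Bc$ need not coincide. The argument is rescued by the fact that a generator of a contraction semigroup is dissipative with respect to \emph{every} $u^*\in J(u)$ (differentiate $\langle u^*,\Pc_t^0 u\rangle\le\|u\|^2$ at $t=0$), so the $u^*$ furnished by dissipativity of $\eps\Bc$ automatically works for $\Ac$ as well. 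Second, in the continuation step for $B=1$ you need that $\eps\Bc$ is relatively bounded with respect to each $\overline{\Ac_{s_0}}$, not just $\Ac$; this follows from the estimate $\|\Ac u\|\le \|\Ac_{s_0}u\|+s_0\|\eps\Bc u\|\le \|\Ac_{s_0}u\|+s_0(A\|u\|+\|\Ac u\|)$, which for $s_0<1$ can be solved for $\|\Ac u\|$ and substituted back. With these two clarifications your outline matches the textbook argument (cf.\ Pazy, Theorem~3.3.2, or Engel--Nagel, Theorem~III.2.7).
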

\noindent
We now check the conditions of Theorem \ref{thm:semigroup} with $\Ac_0\equiv \Ac$ 
and $\eps \eta \Ac_1\equiv \eps \Bc$.
First, by Corollary II.3.17 in~\citet*{engel2006short}, $\Ac_0$,
as the generator of a L\'evy process that is an exponentially special semimartingale, 
generates a $C_0$-contraction semigroup.  
Next, by Theorem 2.12 in~\citet*{hoh1998pseudo}, $\eps \eta \Ac_1$ satisfies the positive maximum principle, 
and hence is dissipative (\citet*{ethier1986markov}, Lemma 4.2.1).
Since~$\eta \in \Sc$ and since Hilbert spaces are reflexive, then $\eps \eta \Ac_1$ has a densely defined adjoint (see the discussion after the main theorem in~\citet*{chernoff}).
Theorem \ref{thm.u.eps} will then follow if we can prove that~$\eps \eta \Ac_1$ 
is bounded relative to $\Ac_0$ with relative bound $B$.
\begin{proposition}
\label{thm:relative.bound}
Suppose that there exist two constants $A \geq 0$ and $B \leq 1$ such that $\eps$ satisfies
\begin{align}
\eps^2
        &\leq    \inf_{\lam \in \Rb} \frac{A^2 + B^2 |\phi_\lam|^2}{\left\|\eta \right\|^2 \cdot | \chi_\lam |^2}.
\end{align}
Then $\eps \eta \Ac_1$ is bounded relative to $\Ac_0$ with relative bound $B$.
\end{proposition}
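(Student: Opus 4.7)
The plan is to establish the relative bound $\norm{\eps \eta \Ac_1 u} \leq A \norm{u} + B \norm{\Ac_0 u}$ for $u \in \dom{\Ac_0}$ by working entirely in Fourier space, exploiting the three eigenfunction relations $\Ac_0 \psi_\lam = \phi_\lam \psi_\lam$, $\Ac_1 \psi_\lam = \chi_\lam \psi_\lam$, and the generalized Fourier representation~\eqref{eq:identity}. Since $\sqrt{a^2 + b^2} \leq a + b$ for nonnegative $a,b$, it suffices to prove the (squared) inequality
\begin{align}
\norm{\eps \eta \Ac_1 u}^2 \leq A^2 \norm{u}^2 + B^2 \norm{\Ac_0 u}^2.
\end{align}

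The first step is to absorb the multiplication by $\eta$: the estimate $\norm{\eta \Ac_1 u}^2 \leq \norm{\eta}^2 \cdot \norm{\Ac_1 u}^2$ already noted in Section~\ref{sec:proof2} (which is finite thanks to $\eta \in \Sc$ and $\dom{\Ac_0}\subseteq \dom{\Ac_1}$ from~\eqref{eq:implies}) immediately gives
\begin{align}
\norm{\eps \eta \Ac_1 u}^2 \leq \eps^2 \norm{\eta}^2 \norm{\Ac_1 u}^2.
\end{align}
Next, I would use the Plancherel-type identity together with the eigenfunction relations to write all three $L^2$ norms as integrals of the same spectral density $|\<\psi_\lam,u\>|^2$:
\begin{align}
\norm{u}^2 = \int_\Rb |\<\psi_\lam,u\>|^2 d\lam, \qquad
\norm{\Ac_0 u}^2 = \int_\Rb |\phi_\lam|^2 |\<\psi_\lam,u\>|^2 d\lam, \qquad
\norm{\Ac_1 u}^2 = \int_\Rb |\chi_\lam|^2 |\<\psi_\lam,u\>|^2 d\lam.
\end{align}

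The hypothesis~\eqref{eq:eps.bound} is precisely equivalent to the pointwise inequality
\begin{align}
\eps^2 \norm{\eta}^2 |\chi_\lam|^2 \leq A^2 + B^2 |\phi_\lam|^2, \qquad \lam \in \Rb.
\end{align}
Multiplying by the nonnegative weight $|\<\psi_\lam,u\>|^2$ and integrating over $\Rb$ yields
\begin{align}
\eps^2 \norm{\eta}^2 \norm{\Ac_1 u}^2 \leq A^2 \norm{u}^2 + B^2 \norm{\Ac_0 u}^2,
\end{align}
which, combined with the first step, gives the desired squared bound and hence the proposition.

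The argument is essentially a Parseval computation, so there is no serious technical obstacle; what requires care is simply that each integral makes sense. The inclusion $\dom{\Ac_0}\subseteq\dom{\Ac_1}$ needed to ensure $\norm{\Ac_1 u}<\infty$ is exactly the content of hypothesis~\eqref{eq:implies}, and the finiteness of $\norm{\eta}$ is guaranteed by $\eta \in \Sc$. The mildly delicate point is the bound on the multiplication operator $u \mapsto \eta u$ by $\norm{\eta}$: one must be consistent about the norm used when passing $\eta$ outside the $L^2$-norm, but this is handled by the previously stated inequality and leaves the spectral argument above intact.
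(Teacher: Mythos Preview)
Your proof is correct and follows essentially the same route as the paper's: translate the hypothesis into the pointwise Fourier-side inequality $\eps^2\norm{\eta}^2|\chi_\lam|^2 \leq A^2 + B^2|\phi_\lam|^2$, integrate against $|\<\psi_\lam,u\>|^2$, use the multiplication bound $\norm{\eta \Ac_1 u}\leq \norm{\eta}\,\norm{\Ac_1 u}$, and pass from the squared estimate to the linear one via $\sqrt{a^2+b^2}\leq a+b$. Your remark about the delicacy of the multiplication-operator bound is well placed, but since you are invoking the same inequality the paper has already asserted, the argument matches the paper's exactly.
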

\begin{proof}
For any $u \in \dom{\Ac_0}$, the inequality in the proposition holds if and only if 
$\eps^2 \leq  \frac{A^2 + B^2 |\phi_\lam|^2}{\left\|\eta \right\|^2 \cdot | \chi_\lam |^2}$
holds for all $\lambda\in\Rb$.
This in turn is equivalent to
$0 \leq A^2 + B^2 |\phi_\lam|^2 -  \eps^2 \left\|\eta \right\|^2 \cdot | \chi_\lam |^2$ 
for any $\lambda\in\Rb$, which implies that
$$
0 \leq \int_\Rb d\lam  |\<\psi_\lam,u\>|^2 \( A^2 + B^2 |\phi_\lam|^2 -  \eps^2 \left\|\eta \right\|^2 \cdot | \chi_\lam |^2 \) 
=              A^2 \left\| u \right\|^2 + B^2 \left\| \Ac_0 u \right\|^2 - \eps^2 \left\|\eta \right\|^2 \cdot \left\| \Ac_1 u \right\|^2.
$$
This then implies 
$\left\|\eps \eta \Ac_1 u \right\|^2 \leq A^2 \left\| u \right\|^2 + B^2 \left\| \Ac_0 u \right\|^2$.
Since  $\left\| \eta \Ac_1 u \right\| \leq \left\|\eta \right\| \cdot \left\| \Ac_1 u \right\|$, 
we then deduce the final inequality
$\left\|\eps \eta \Ac_1 u \right\|  \leq A \left\| u \right\| + B \left\| \Ac_0 u \right\|$,
and the proposition follows.
\end{proof}
We have now shown that $\Ac^\eps$ generates a semigroup $\Pc^\eps$.  Therefore, we identify $u^\eps(t,y) = \Pc_t^\eps h(y)$ and we note that $u^\eps(t,y)$ is analytic in the perturbing parameter $\eps$.

%
%

\section{Mathematica code for computing $\sig^{(2,M)}$}
\label{sec:mathematica}
The following code will produce $\sig^{(2,M)}$ from equation \eqref{eq:sig.2.M.b} with $M=10$.
\begin{align}
M
        &=      10;\\
\phi [\lambda \_]
        &=      \frac{1}{2} \text{a0}^2 \left(-\lambda ^2-i \lambda \right);
\\
\chi [\lambda \_]
        &=      \frac{1}{2} \text{a1}^2 \left(-\lambda ^2-i \lambda \right);
\\
\text{b1}[\text{t$\_$},\text{a0$\_$},\text{a1$\_$}]
        &=\text{CoefficientList}\left[\sum _{n=1}^M \frac{t{}^{\wedge}n}{n!}(\phi [-i d - i \beta ]-\phi [-i d]){}^{\wedge}(n-1)\frac{1}{2} \text{a1}^2,d\right];
\\
\text{b2}[\text{t$\_$},\text{a0$\_$},\text{a1$\_$}]
        &=      \text{CoefficientList}\Big[ 
                        \sum _{n=2}^M \frac{t{}^{\wedge}n}{n!}\sum _{k=1}^{n-1} \text{Binomial}[n-1,k](-\phi [-i d]){}^{\wedge}(n-1-k) \\ &\qquad
                        \sum _{m=0}^{k-1} (\phi [-i d - i \beta ]){}^{\wedge}m (\phi [-i d - 2i \beta ]){}^{\wedge}(k-1-m)\chi [-i d - i \beta ]\frac{1}{2} \text{a1}^2, d \Big];
\\
\text{dp}[\text{t$\_$},\text{y$\_$},\text{a0$\_$},\text{k$\_$}]
        &=      (y-k+(\text{a0}{}^{\wedge}2/2)t)/(\text{a0} \sqrt{t});
\\
\text{$\sigma $1}[\text{t$\_$},\text{y$\_$},\text{a0$\_$},\text{a1$\_$},\beta \_,\text{k$\_$}]
        &=      \text{Exp}[\beta  y]\sum _{n=0}^{M-1} \text{b1}[t,\text{a0},\text{a1}][[n+1]] \\ &\qquad \text{FullSimplify}\left[\frac{D[\text{Exp}[y-\text{dp}[t,y,\text{a0},k]{}^{\wedge}2/2],\{y,n\}]}{t \, \text{a0} \, \text{Exp}[y-\text{dp}[t,y,\text{a0},k]{}^{\wedge}2/2]}\right];
\\
\text{$\sigma $1}[\text{t$\_$},\text{y$\_$},\text{a0$\_$},\text{a1$\_$},\beta \_,\text{k$\_$}]
        &=      \text{Exp}[2 \beta  y]\sum _{n=0}^{M-1} \text{b2}[t,\text{a0},\text{a1}][[n+1]] \\ &\qquad \text{FullSimplify}\left[\frac{D[\text{Exp}[y-\text{dp}[t,y,\text{a0},k]{}^{\wedge}2/2],\{y,n\}]}{t \, \text{a0} \, \text{Exp}[y-\text{dp}[t,y,\text{a0},k]{}^{\wedge}2/2]}\right] \\ &\qquad
-\frac{1}{2}\Big( \text{$\sigma $1}[t,y,\text{a0},\text{a1},\beta ,k] \Big){}^{\wedge}2\left(\frac{(k-y){}^{\wedge}2}{t \text{a0}{}^{\wedge}3}-\frac{t \, \text{a0}}{4}\right);
\\
\sigma{2M} [\text{t$\_$},\text{y$\_$},\text{a0$\_$},\text{a1$\_$},\beta \_,\eps \_,\text{k$\_$}]
        &=      \text{a0}+\eps\,  \text{$\sigma $1}[t,y,\text{a0},\text{a1},\beta ,k]+\eps{}^{\wedge}2 \, \text{$\sigma $2}[t,y,\text{a0},\text{a1},\beta ,k];
\end{align}

%
%

\clearpage
\begin{figure}[t]
\centering
\begin{tabular}{ | c | c | c |}
\hline
$n=1$ & $n=2$ & $n=3$ \\
\includegraphics[width=.3\textwidth,height=.17\textheight]{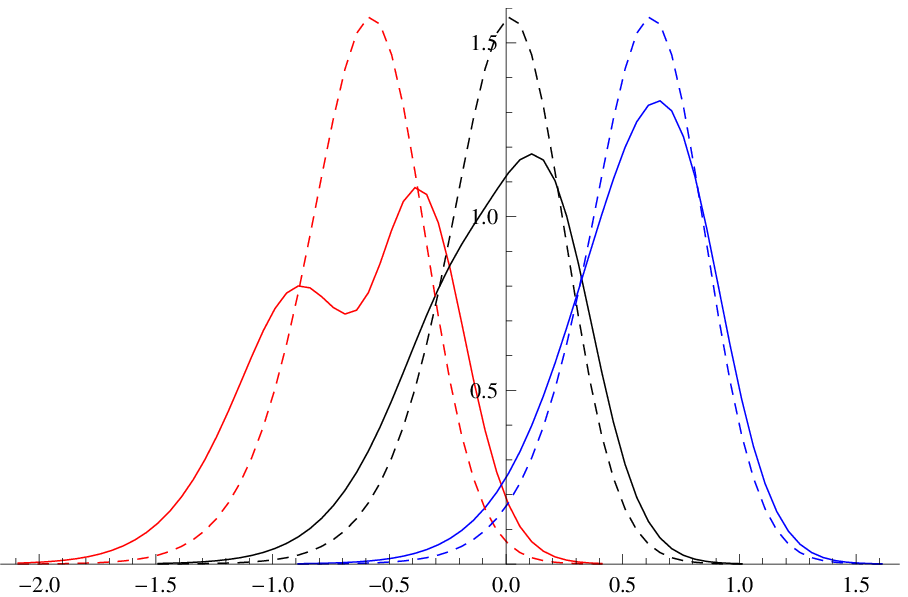} &
\includegraphics[width=.3\textwidth,height=.17\textheight]{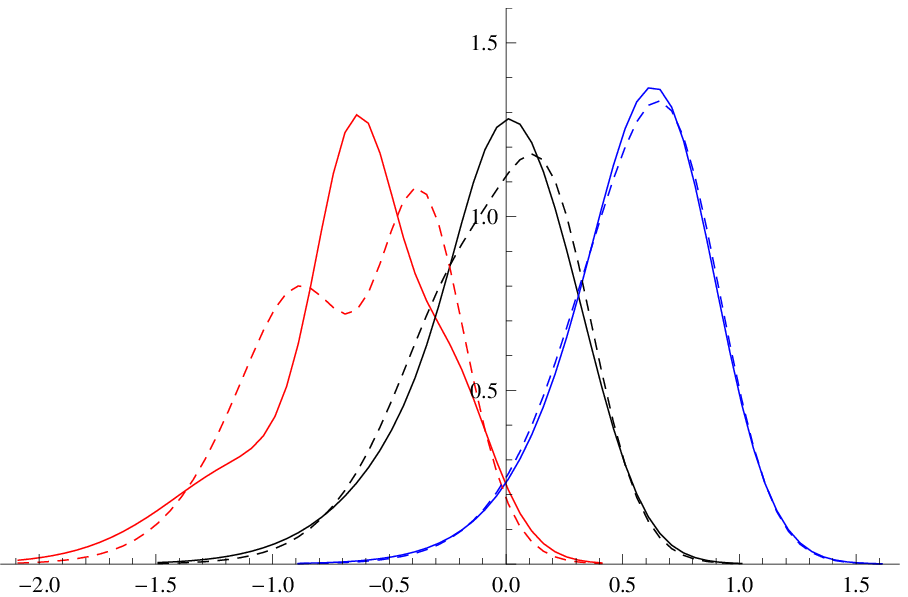} &
\includegraphics[width=.3\textwidth,height=.17\textheight]{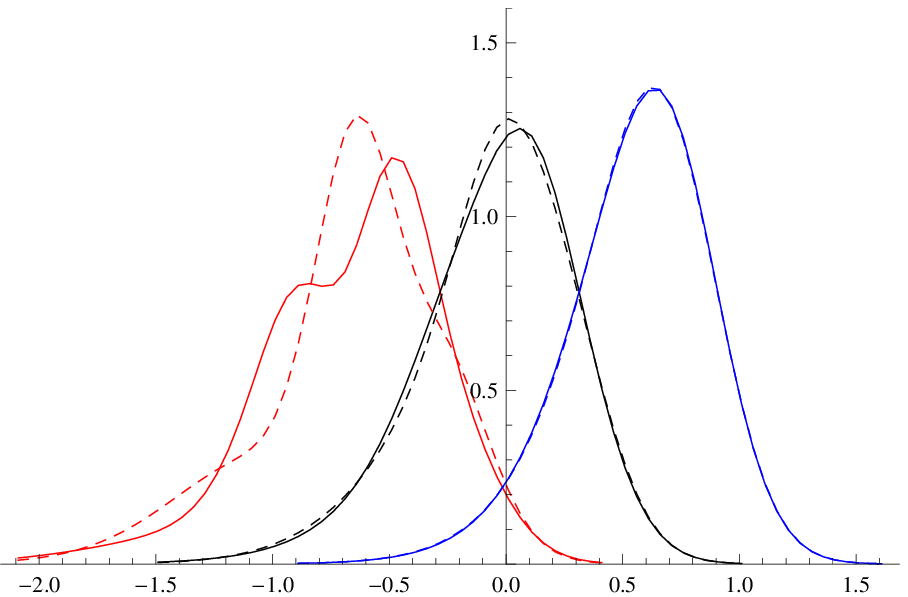} \\ \hline 
$n=4$ & $n=5$ & $n=6$ \\
\includegraphics[width=.3\textwidth,height=.17\textheight]{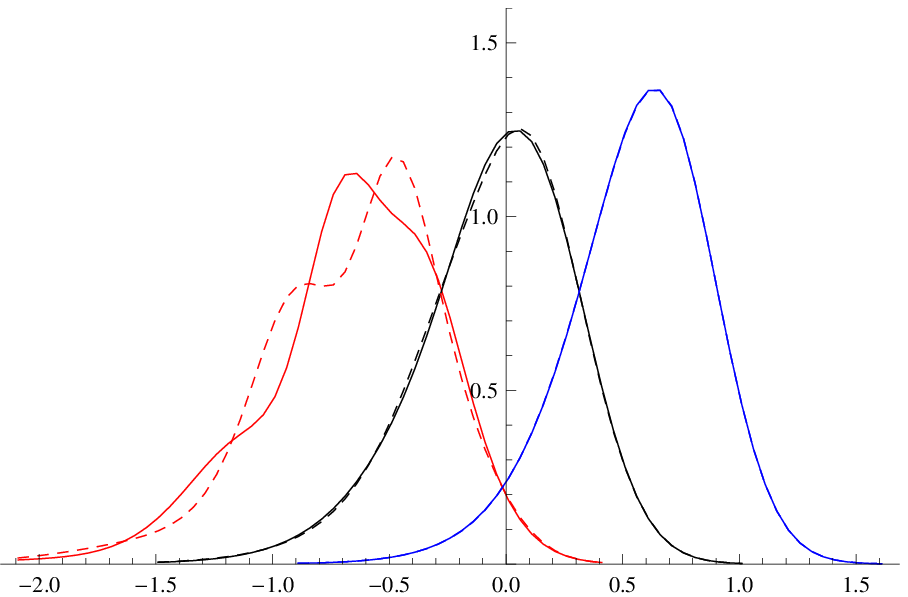} &
\includegraphics[width=.3\textwidth,height=.17\textheight]{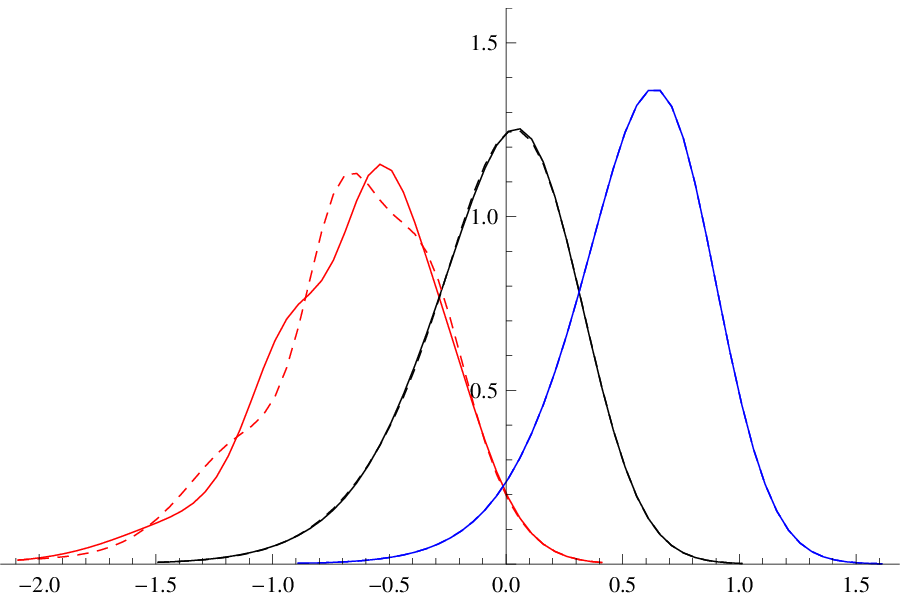} &
\includegraphics[width=.3\textwidth,height=.17\textheight]{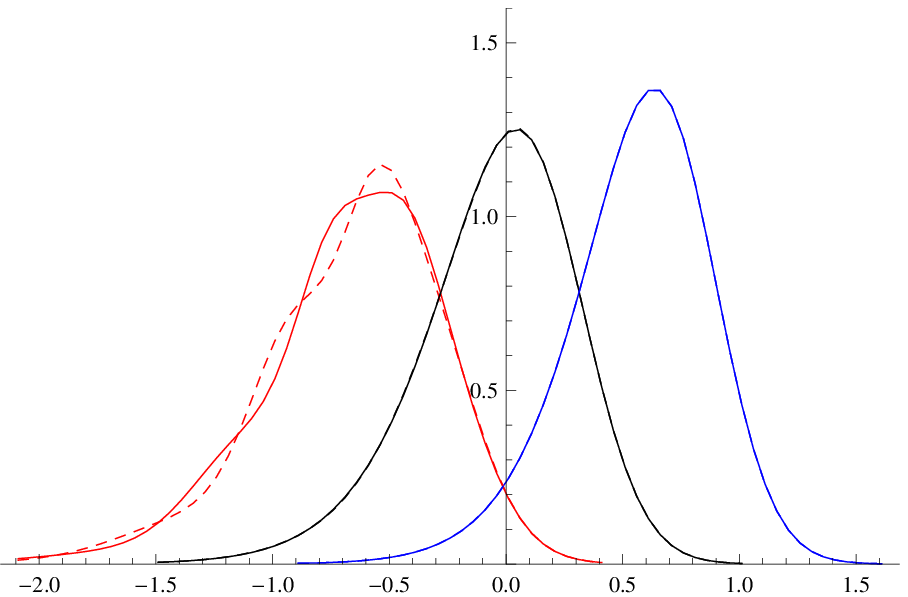} \\ \hline 
$n=7$ & $n=7$ & $n=9$ \\
\includegraphics[width=.3\textwidth,height=.17\textheight]{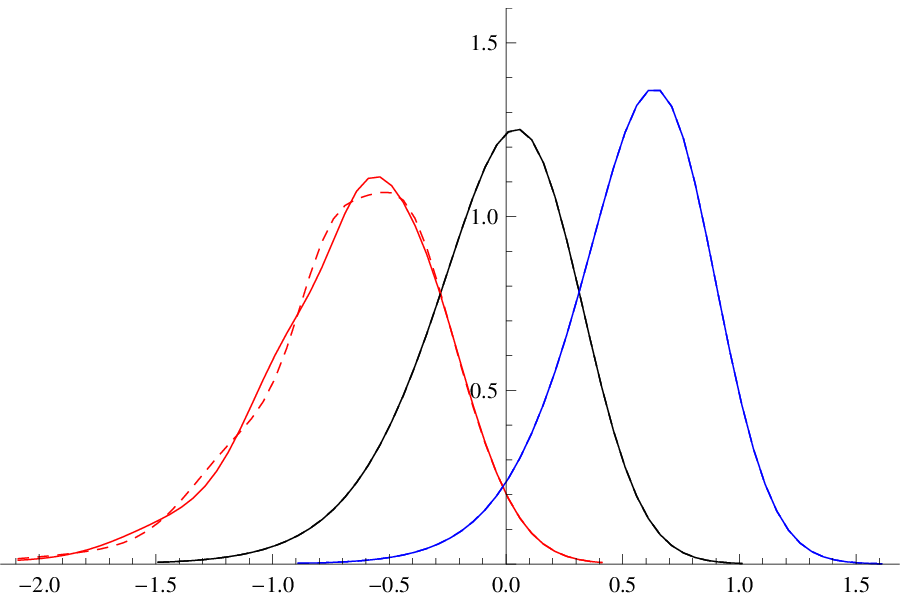} &
\includegraphics[width=.3\textwidth,height=.17\textheight]{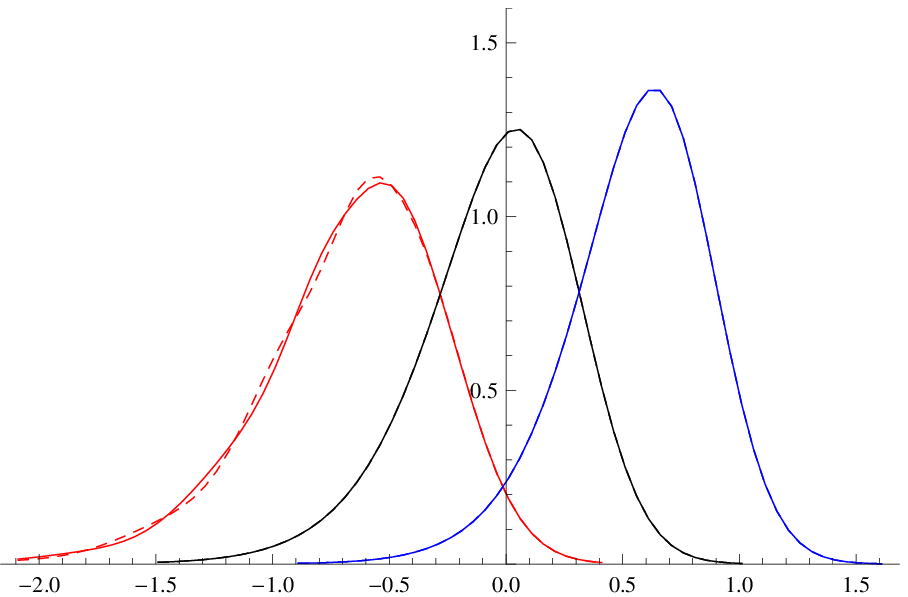} &
\includegraphics[width=.3\textwidth,height=.17\textheight]{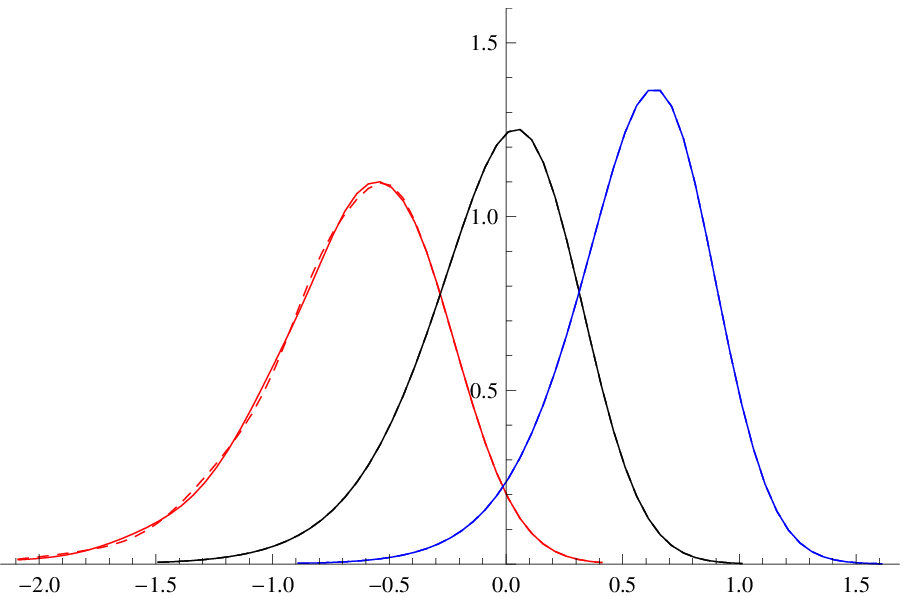} \\ \hline
\end{tabular}
\begin{align}
\nu_i(dz)
        &=              \frac{1}{\sqrt{ 2 \pi s_i^2}} \exp \( \frac{-(z-m_i)^2}{2 s_i^2} \) dz
\end{align}
\caption{For different values of $n$ we plot as a function of $z$ the approximate transition 
densities~$p^{(n)}(t,y,z)$ for $y=-0.6$ (solid red), $y=0.0$ (solid black) and $y=0.6$ (solid blue).  
In order to see the convergence, on each plot, we also graph $p^{(n-1)}(t,y,z)$ (dashed lines).  
Note that as $y$ moves closer to $-\infty$ (i) the transition densities become wider, (ii) convergence of the densities requires more terms and (iii) the densities have fatter tails on the left than on the right.  
All three phenomena are due to the fact that the local volatility and the jump-intensity rise as $y$ 
decreases to $-\infty$.  
The following parameters are used in these plots: $a_0=0.20$, $a_1=0.10$, $c_0=0.0$, $c_1=0.0$, $s_0=0.15$, $m_0=-0.10$, $s_1=0.15$, $m_1=-0.10$, $\eps=1$, $\beta=-0.95$, $t=1.0$.
}
\label{fig:density2}
\end{figure}


\clearpage
\begin{figure}[t]
\centering
\begin{tabular}{ | c | c |}
\hline
$t=0.125$ & $t=0.25$ \\
\includegraphics[width=.475\textwidth,height=.3\textheight]{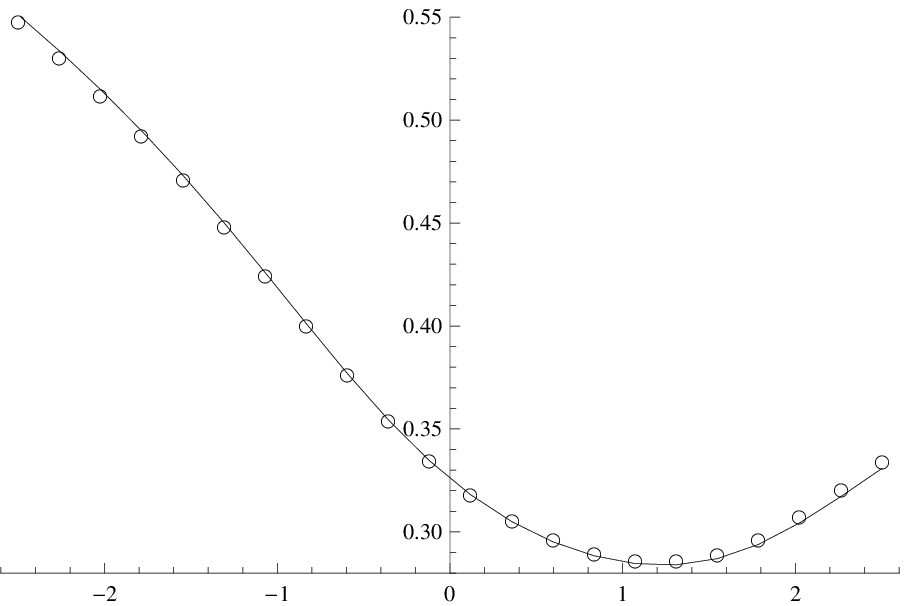} &
\includegraphics[width=.475\textwidth,height=.3\textheight]{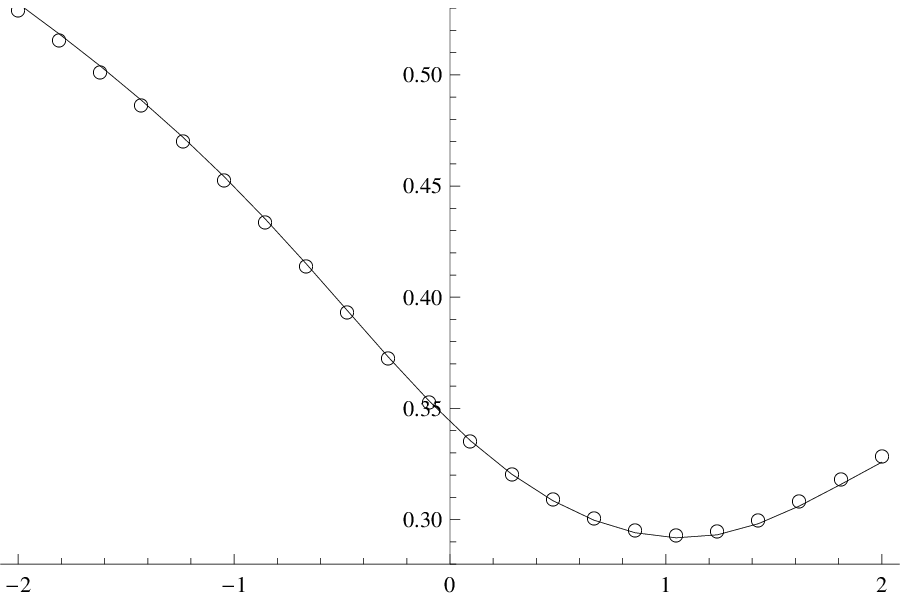} \\ \hline 
$t=0.5$ & $t=1.0$ \\
\includegraphics[width=.475\textwidth,height=.3\textheight]{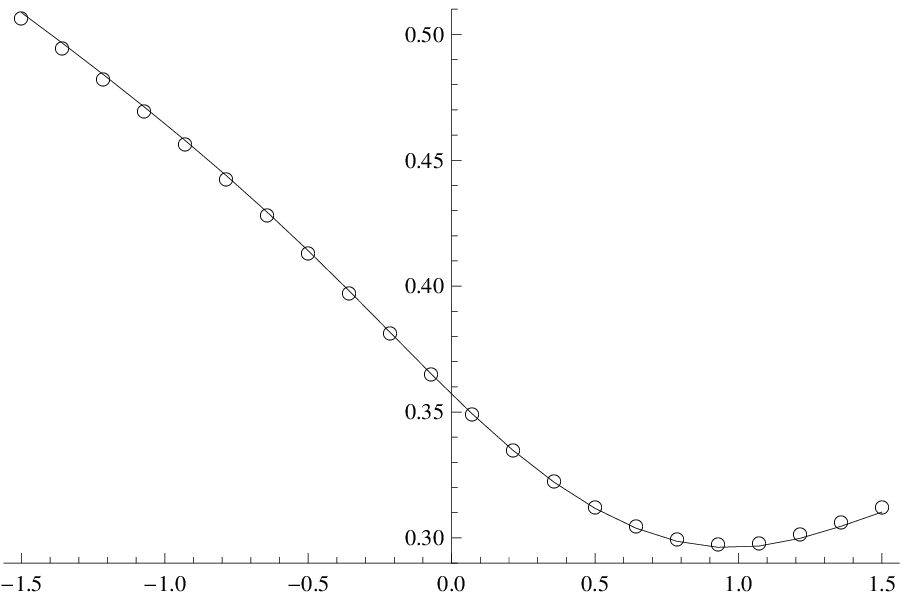} &
\includegraphics[width=.475\textwidth,height=.3\textheight]{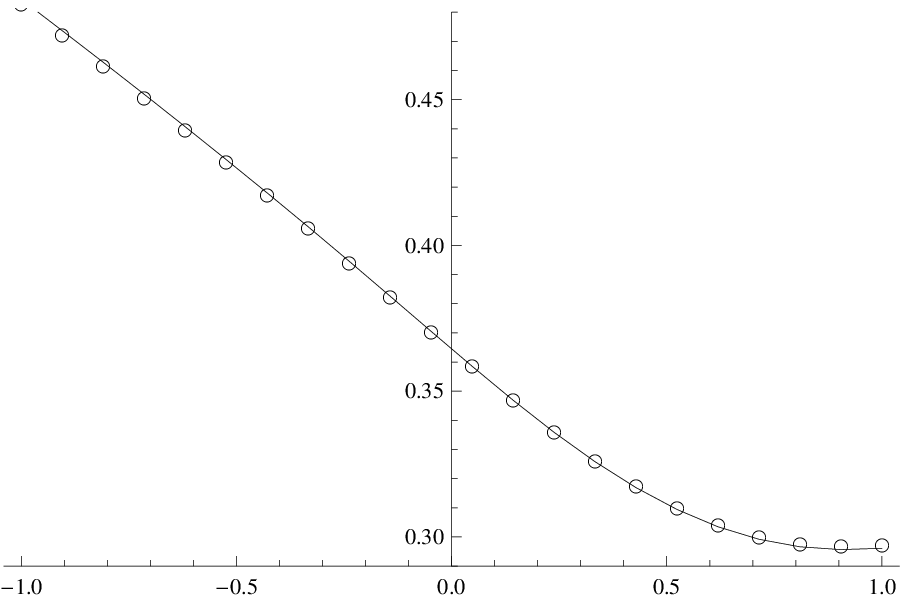} \\ \hline 
\end{tabular}
\begin{align}
\nu_i(dz)
        &=              \frac{1}{\sqrt{ 2 \pi s_i^2}} \exp \( \frac{-(z-m_i)^2}{2 s_i^2} \) dz
\end{align}
\caption{In the above plots, we compute option prices using formula~(\ref{eq:price.CEV})
with $N=10$ and also by Monte Carlo simulation.  
We then convert these prices to implied volatilities by inverting Black-Scholes numerically.  
We do \emph{not} use the implied volatility expansion described in Section \ref{sec:impvol}.  
The solid line corresponds to implied volatilities computed using pricing formula (\ref{eq:price.CEV}).  
The circles correspond to implied volatilities resulting from the Monte Carlo simulation.  
Units on the horizontal axis are logmoneyness to maturity ratios ($\text{LMMR}:=(k-y)/t$).  
Note the steep skew, which is due to the fact that the local volatility and the jump intensity increase as the value of the underlying drops.  
The following parameters are used in this plot: $a_0=0.20$, $a_1=0.10$, $c_0=0.00$, $c_1=0.00$, $s_0=0.20$, $s_1 = 0.10$, $m_0=-0.20$, $m_1 = -0.10$, $\eps=1.0$, $\beta=-1.25$, $y=-0.10$.}
\label{fig:impvol}
\end{figure}


\clearpage
\begin{figure}
\centering
\begin{tabular}{ | c | c |}
\hline
$n=0$ & $n=1$ \\
\includegraphics[width=.465\textwidth,height=.24\textheight]{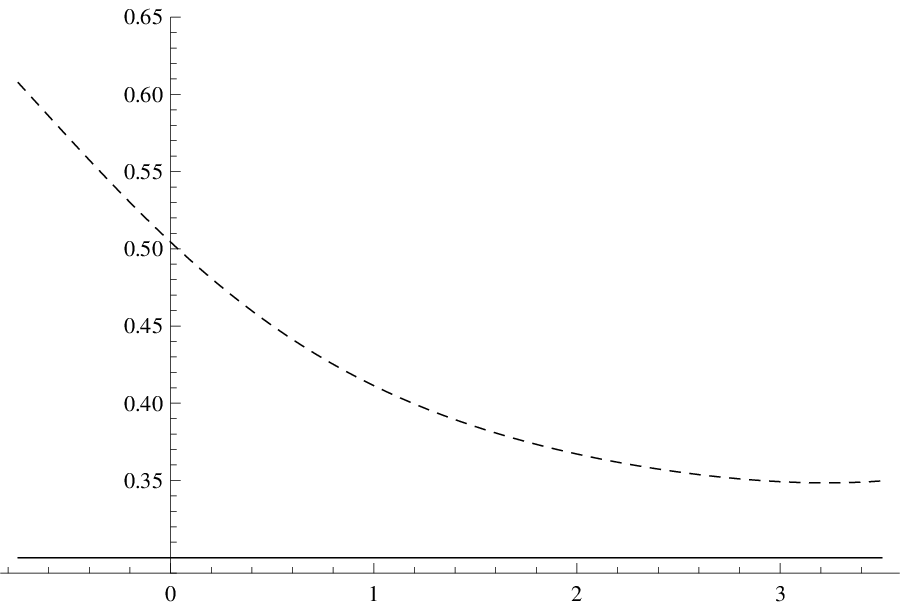} &
\includegraphics[width=.465\textwidth,height=.24\textheight]{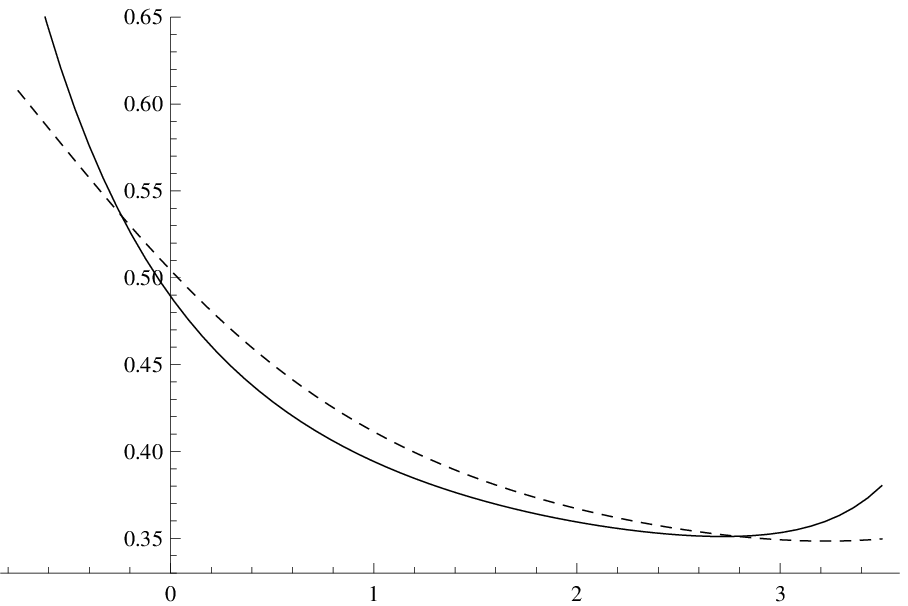} \\ \hline 
$n=2$ & $n=3$ \\
\includegraphics[width=.465\textwidth,height=.24\textheight]{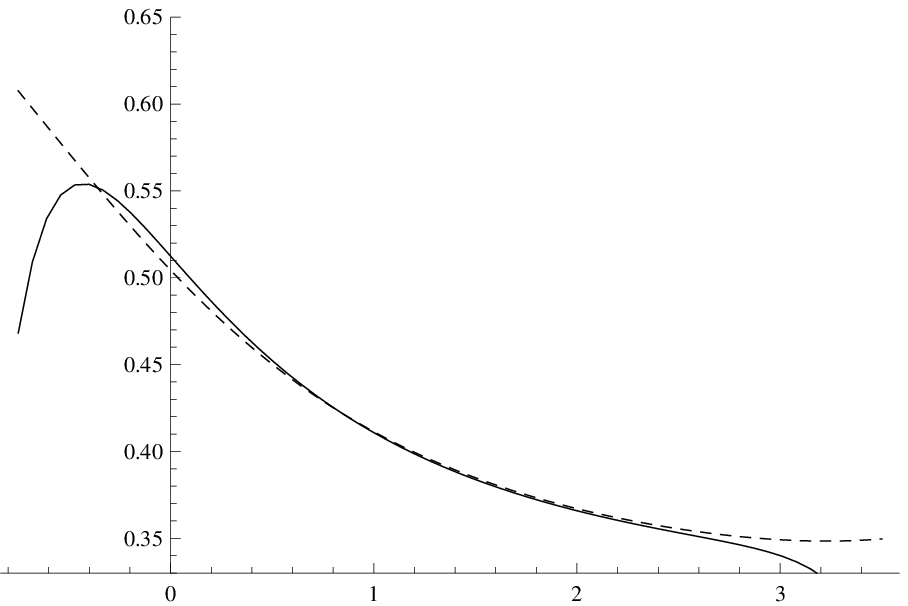} &
\includegraphics[width=.465\textwidth,height=.24\textheight]{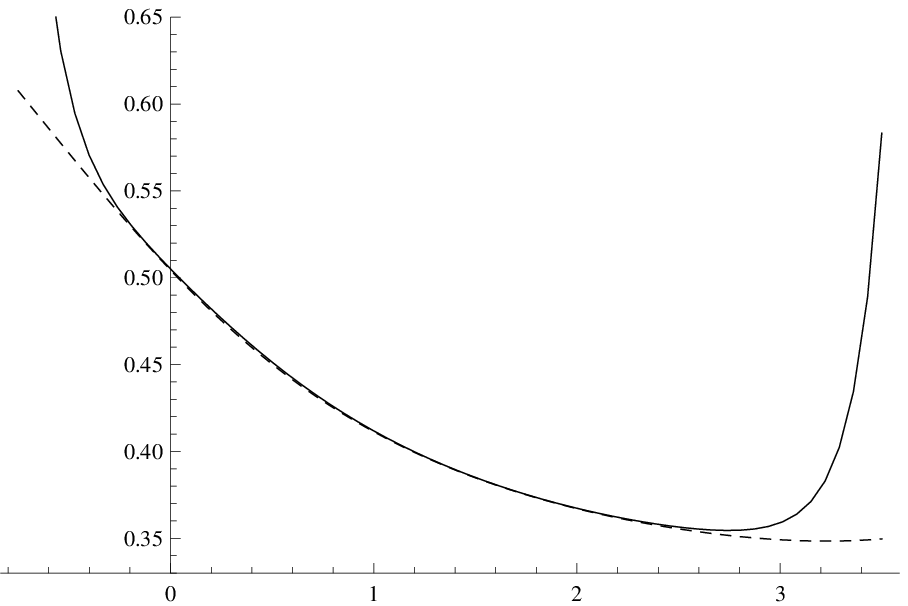} \\ \hline 
$n=4$ & $n=5$ \\
\includegraphics[width=.465\textwidth,height=.24\textheight]{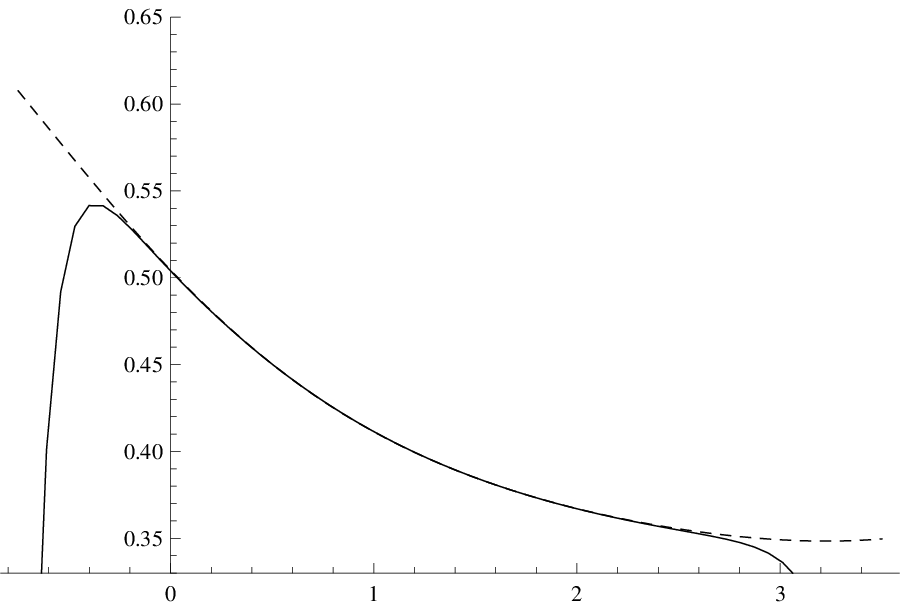} &
\includegraphics[width=.465\textwidth,height=.24\textheight]{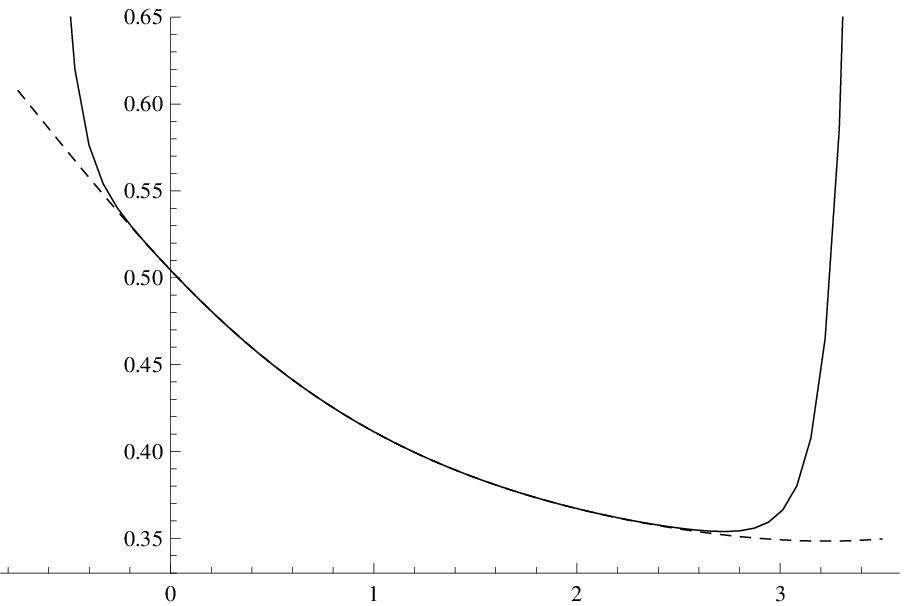} \\ \hline
\end{tabular}
\begin{align}
\nu_0(dz)
        &=              0 , &
\nu_1(dz)
        &=              \frac{1}{\sqrt{ 2 \pi s_1^2}} \exp \( \frac{-(z-m_1)^2}{2 s_i^2} \) dz .
\end{align}
\caption{We plot $\sig^{(n)}$, the order $\Oc(\eps^n)$ approximation of implied volatility (solid black), and $\sig^{\eps}$, the exact implied volatility (dashed black) as a function of $\text{LMMR}$.  The following parameters are used in these plots: $a_0=0.30$, $a_1=0.00$, $c_0=0.00$, $c_1=0.00$, $s_1=0.2$, $m_1=-0.40$, $\eps=4$, $\beta=-1.25$, $t=0.125$, $y=0.10$.}
\label{fig:impvol2}
\end{figure}


\clearpage
\begin{figure}
\centering
\begin{tabular}{ | c | c | c | }
\hline
87 DTM & 115 DTM & 142 DTM\\
\includegraphics[width=.305\textwidth,height=.18\textheight]{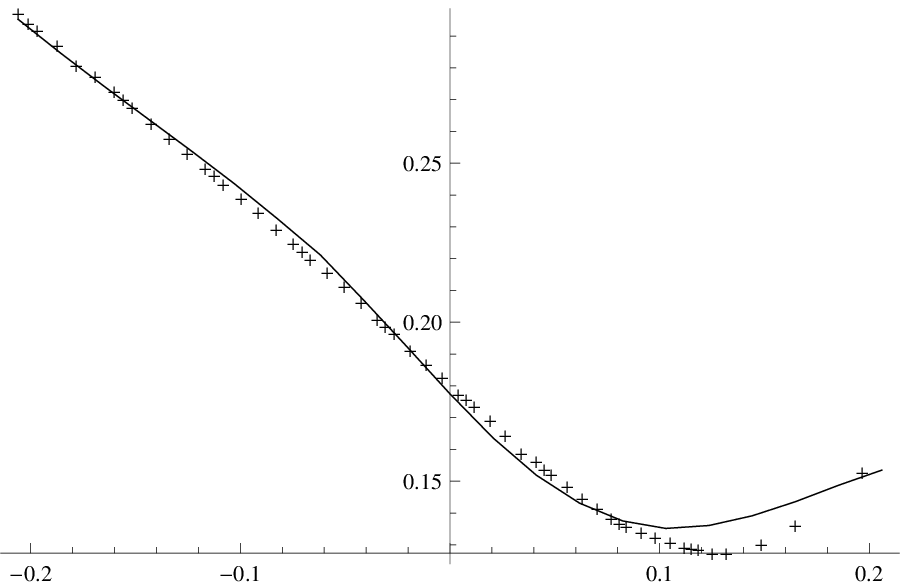} &
\includegraphics[width=.305\textwidth,height=.18\textheight]{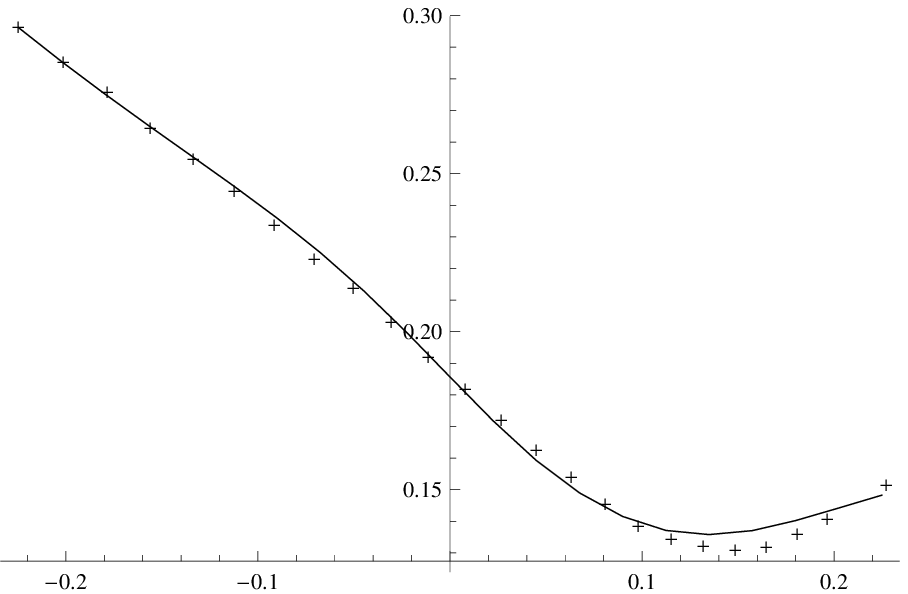} &
\includegraphics[width=.305\textwidth,height=.18\textheight]{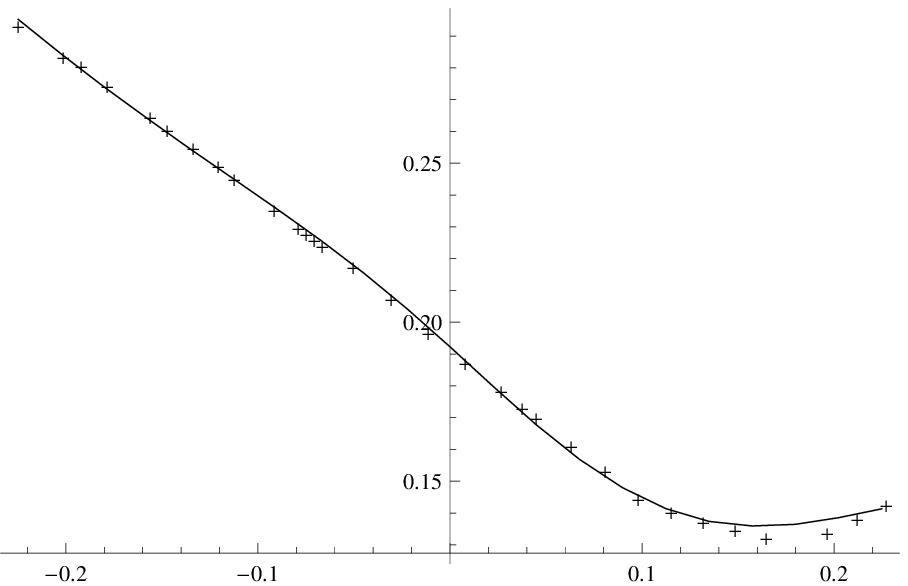} \\
\hline
\end{tabular}
\caption{Using the CEV-like model from Section \ref{sec:example}, we perform a calibration to S{\&}P500 options from January 24, 2012.  
The horizontal axis is in units of $\log$-moneyness: $\text{LM}:= k - y$ and
the vertical axis in units of implied volatility.  
The fit is a least-squares algorithm to implied volatilities across the three maturities.  
We emphasize that \emph{we do not fit maturity-by-maturity}.  
The L\'evy measures $\nu_0$ and $\nu_1$ are Gaussian with common mean $m$ and variance $s$ 
but different intensities $\Gamma_0$ and $\Gamma_1$.  
Thus, we allow the jump intensity, but not the jump distribution, to change as a function of $y$.  
The parameters resulting from the calibration are as follows: $a_0=0.059$, $c_0=0.009$, 
$\Gamma_0=1.105$, $a_1=0.057$, $c_1=0.010$, $\Gamma_1=1.095$, $m=-0.076$, $s=0.078$, $\beta=0.410 $, $\eps=1.00$.  
Without loss of generality, we assume $y=0$, which simply results in a rescaling of parameters.}
\label{fig:SP500}
\end{figure}

\begin{table}
\centering
\begin{tabular}{c|c|cccccccccc}
\hline
$N$ & $T_N/T_0$ & {} & {} & {} & {} & $\text{IV}^{(N)}$ & {} & {} & {} & {} & {} \\ 
\hline
0 & 1.00 & 0.2420 & 0.2162 & 0.1933 & 0.1719 & 0.1486 & 0.1222 & 0.1014 & 0.0929 & 0.0963 & 0.1046 \\ 
1 & 1.04 & 0.2929 & 0.2683 & 0.2476 & 0.2306 & 0.2166 & 0.2006 & 0.1676 & 0.1318 & 0.1183 & 0.1211 \\ 
2 & 1.49 & 0.2960 & 0.2709 & 0.2479 & 0.2265 & 0.2049 & 0.1841 & 0.1743 & 0.1558 & 0.1341 & 0.1307 \\ 
3 & 2.22 & 0.2951 & 0.2698 & 0.2475 & 0.2276 & 0.2088 & 0.1887 & 0.1634 & 0.1547 & 0.1429 & 0.1354 \\ 
4 & 3.26 & 0.2953 & 0.2701 & 0.2475 & 0.2272 & 0.2077 & 0.1877 & 0.1694 & 0.1483 & 0.1437 & 0.1379 \\ 
5 & 4.48 & 0.2952 & 0.2700 & 0.2475 & 0.2273 & 0.2079 & 0.1879 & 0.1674 & 0.1518 & 0.1404 & 0.1390 \\ 
6 & 6.16 & 0.2952 & 0.2700 & 0.2475 & 0.2273 & 0.2080 & 0.1878 & 0.1675 & 0.1519 & 0.1403 & 0.1391 \\ 
\hline
{} & LM &  -0.225 & -0.180 & -0.135 & -0.090 & -0.045 & 0.000 & 0.045 & 0.090 & 0.135 & 0.180  \\ 
\hline
\end{tabular}
\caption{Using the parameters obtained in the calibration to S{\&}P500 options (see Figure \ref{fig:SP500}) we compute approximate prices $u^{(N)}$ using equation (\ref{eq:price.CEV}).  We then compute implied volatilities ($\text{IV}^{(N)}$) by inverting the Black-Scholes pricing formula numerically.  We denote by $T_N$ the computational time required to compute implied volatilities for a series of strikes (listed above in unites of $\log$-moneyness: $\text{LM}:=k-y$) with a time to maturity of 142 days.   
Note that $T_0$ corresponds to the time it takes to compute IV's for an exponential L\'evy model.}
\label{tab:IV}
\end{table}


\clearpage
\begin{figure}
\centering
\includegraphics[width=.95\textwidth,height=.6\textheight]{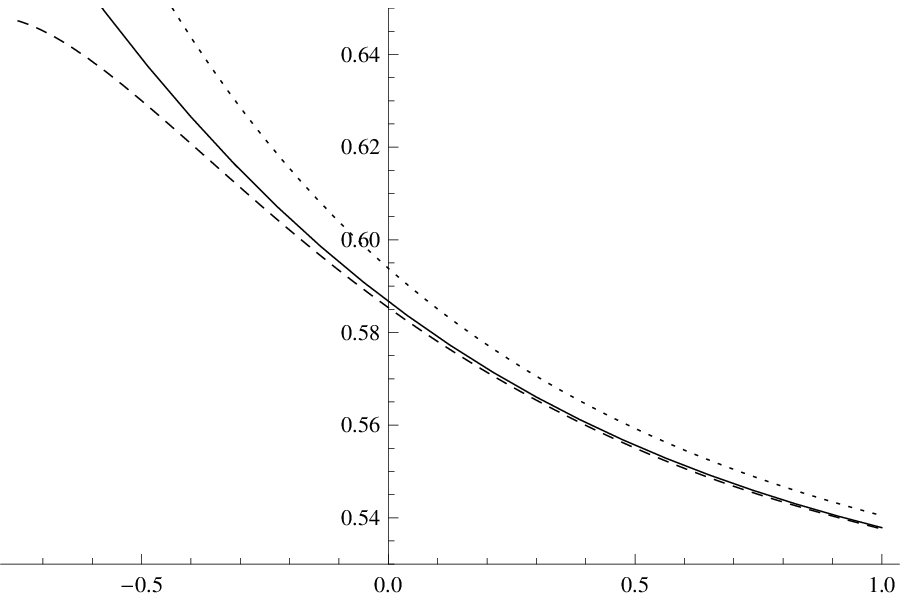} 
\caption{
We consider here the model proposed in Section~\ref{sec:impvol.2}.
We plot the exact implied volatility $\sig^\eps$ (solid), as well as the approximations $\sig^{(2,M)}$ (dashed) and $\sig^{(1,M)}$ (dotted) as a function of $\log$-moneyness: $\text{LM}:= k - y$.  
In the above plot we use the following parameters: 
$t=0.5$, $y=0.0$, $\beta=-2.0$, $\eps=1.0$, $a_0=0.5$ and $a_1=0.3$.  
Observe that $\sig^{(2,M)}$ closely approximates $\sig^\eps$ for all $\text{LM}>-0.5$.}
\label{fig:IV}
\end{figure}

\end{document}